\definecolor{rltred}{rgb}{0.75,0,0}
\definecolor{rltgreen}{rgb}{0,0.5,0}
\definecolor{rltblue}{rgb}{0,0,0.75}
\newcommand{\rev}[1]{#1}
\newcommand{\dis}{\displaystyle}
\providecommand{\@fourthoffour}[4]{#4}
\newcommand\fixstatement[2][\proofname\space of]{%
  \ifcsname thmt@original@#2\endcsname
    \AtEndEnvironment{#2}{%
      \xdef\pat@label{\expandafter\expandafter\expandafter
        \@fourthoffour\csname thmt@original@#2\endcsname\space\@currentlabel}%
      \xdef\pat@proofof{\@nameuse{pat@proofof@#2}}%
    }%
  \else
    \AtEndEnvironment{#2}{%
      \xdef\pat@label{\expandafter\expandafter\expandafter
        \@fourthoffour\csname #1\endcsname\space\@currentlabel}%
      \xdef\pat@proofof{\@nameuse{pat@proofof@#2}}%
    }%
  \fi
  \@namedef{pat@proofof@#2}{#1}%
}
\globtoksblk\prooftoks{1000}
\newcounter{proofcount}
  \edef\next{%
    \noexpand\begin{proof}[\pat@proofof\space\pat@label]%
    \unexpanded\expandafter{\BODY}}%
\def\printproofs{%
  \count@=\z@
  \loop
    \the\toks\numexpr\prooftoks+\count@\relax
    \ifnum\count@<\value{proofcount}%
    \advance\count@\@ne
  \repeat}
\declaretheorem[style=plain,name=Theorem,numberwithin=section]{theorem}
\newtheorem{definition}[theorem]{Definition}
\newtheorem{corollary}[theorem]{Corollary}
\newtheorem{lemma}[theorem]{Lemma}
\newtheorem{proposition}[theorem]{Proposition}
\newtheorem{example}[theorem]{Example}
\newtheorem*{theorem*}{Theorem}
\theoremstyle{remark}
\newtheorem{remark}[theorem]{Remark}
\newcommand{\?}{\' }
\newcommand{\R}{\mathbb{R}}   
\newcommand{\N}{\mathbb{N}}   
\newcommand{\EE}{\mathbb{E}}
\newcommand{\Fk}
{\mathcal{F}^k}
\newcommand{\oq}{\overline{q}}
\newcommand{\phio}{\overline{\phi}}
\newcommand{\musiclab}{\textsc{MusicLab}}
\begin{document}
\color{black}
\title{Popularity Signals in Trial-Offer Markets \\ with Social Influence and Position Bias}
\author{
Felipe Maldonado\footnote{The Australian National University and DATA61, Canberra, Australia (felipe.maldonado@data61.csiro.au)}
\and
Pascal Van Hentenryck\footnote{Industrial and Operations Engineering, University of Michigan, USA (pvanhent@umich.edu).}
\and
Gerardo Berbeglia\footnote{Melbourne Business School, University of Melbourne, Australia (g.berbeglia@mbs.edu).{\bf Corresponding author.}}
\and
Franco Berbeglia\footnote{Tepper Business School, Carnegie Mellon University, USA (fberbegl@andrew.cmu.edu)}
}
\date{\vspace{-5ex}}

\maketitle

\begin{abstract}
\noindent
This paper considers trial-offer markets where consumer preferences
are modeled by a multinomial logit with social influence and position
bias. The social signal for a product is given by its current market
share raised to power $r$ (\rev{or equivalently the number of purchases raised to the power of $r$}). The paper shows that, when $r$ is strictly
between 0 and 1, and a static position assignment (e.g., a quality
ranking) is used, the market converges to a unique equilibrium where
the market shares depend only on product quality, not their initial
appeals or the early dynamics. When $r$ is greater than 1, the market
becomes unpredictable. In many cases, the market goes to a monopoly
for some product: Which product becomes a monopoly depends on the
initial conditions of the market. These theoretical results are
complemented by an agent-based simulation which indicates that
convergence is fast when $r$ is between 0 and 1, and that the quality
ranking dominates the well-known popularity ranking in terms of market
efficiency. These results shed a new light on the role of social
influence which is often blamed for unpredictability, inequalities,
and inefficiencies in markets. In contrast, this paper shows that,
with a proper social signal and position assignment for the products,
the market becomes predictable, and inequalities and inefficiencies
can be controlled appropriately.
\end{abstract}
\smallskip
\noindent \textbf{Keywords.} System dynamics, social influence, stochastic dynamics, Robbins-Monro algorithms,  popularity signals, ranking policies.

\section{Introduction}
\label{section:introduction}

The impact of social influence and product visibilities on consumer
\rev{behaviour} in Trial-Offer (T-O) markets\footnote{A trial-offer market is
  a setting where consumers can try products before deciding whether
  to buy them or not.} has been explored in a variety of settings (e.g.,
\citep{salganik2006experimental,tucker2011does,viglia2014please}).
Social influence can be dispensed through different types of social
signals: A market place may report the number of past purchases of a
product, its consumer ratings, and/or its consumer
recommendations. Recent studies
\citep{engstrom2014demand,viglia2014please} however came to the
conclusion that the popularity signal (i.e., the number of past
purchases or the market share) has a much stronger impact on consumer
behaviour than the average consumer rating signal.\footnote{The music
  market {\sc iTunes} shows the normalised market share of each song
  of an album.} These two experimental studies were conducted in very
different settings, using the Android application platform in one case
and hotel selection in the other. Consumer preferences are also
influenced by product visibilities, a phenomenon that has been widely
observed in internet advertisement (e.g.,
\citep{craswell2008experimental}), in online stores such as {\sc
  Expedia}, {\sc Amazon}, and {\sc iTunes}, as well as physical retail
stores (see, e.g., \citep{lim2004metaheuristics}).

Despite its widespread use in online settings (including for songs,
albums, movies, hotels, and cell phones to name only a few), there is
considerable debate in the scientific community about the benefits of
social influence. Many researchers have pointed out the potential
negative effects of social influence. The seminal work of Salganik et
al.  \citep{salganik2006experimental} on the \musiclab{} experimental
market demonstrated that social influence can introduce significant
unpredictability, inequality, and inefficiency in T-O markets. These
results were reproduced by many researchers (e.g.,
~\cite{salganik2009web,salganik2008leading,Muchnik2013,vandeRijt2014}).
More recently, Hu et al. \citep{hu2015liking} studied a newsvendor
problem with two substitutable products with the same quality in which
consumer preferences are affected by past purchases. The authors
showed that the market is unpredictable but can become less so if one
of products has an initial advantage. Altszyler et al. \citep{altszyler2017transient} has recently studied the impact of product appeal
 and product quality in a trial-offer market model with social influence under a finite time horizon. The authors showed that there exists a logarithmic tradeoff between the two: the final product market share remains constant if a decrease in product quality 
 is followed by an exponential increase in the product appeal. Other researchers have focused
on understanding when these undesirable side-effects arise and where
they come from. Ceyhan et al. \citep{Ceyhan11} studied a market
specified by a logit model where a constant $J$ captures the strength
of the social signal. They showed that the market \rev{behaviour} (e.g.,
whether it is predictable) depends on the strength of the social
signal. Their results did not consider product visibilities, which is
another important aspect of T-O markets. Indeed, various researchers
(e.g., \cite{lerman2014leveraging,Rank,ICWSM16SI,IJCAI16SI,abeliuk2017taming}) indicated
that unpredictability and inefficiencies often depend on how products
are displayed in the market. In particular, Abeliuk et al.
\cite{Rank} show that social influence is always beneficial in
expectation when the products are ordered by the {\em performance
  ranking} that maximises the purchases greedily at each step. This
result was obtained using the generalised multinomial logit model
proposed by Krumme et al. \citep{krumme2012quantifying} to reproduce
the \musiclab{} experiments. Van Hentenryck et al. \cite{ICWSM16SI}
proves a similar result for the {\em quality ranking} that assigns the
highest quality products to the most visible positions. In addition,
they show that the market converges to a monopoly for the highest
quality product.  These results contrast with the \musiclab{}
experiments which relied on the {\em popularity ranking} that
dynamically assigns the most popular products to the most visible
positions.

This paper seeks to expand our understanding of social influence in
T-O markets and explores the role of the social signal in conjunction
with product visibilities. Our starting point is the generalised
multinomial logic model of Krumme et al. \citep{krumme2012quantifying},
which we extend to vary the strength of the social signal. More
precisely, this paper considers a T-O market where the probability of
purchasing product $i$ at time $t$ is given by
\begin{equation}
\label{eq-pp}
p_i(\phi^t) = \frac{v_{\sigma(i)} q_i (\phi_i^t)^r}{\sum_{j=1}^n
  v_{\sigma(j)} q_j (\phi_j^t)^r}
\end{equation}
where $\sigma$ is a bijection from $n$ products to $n$ positions,
$v_{k}\in \R$ is the visibility of position $k$, $q_i\in \R$ is the
inherent quality of product $i$, $\phi_i^t$ is the market share of
product $i$ at time $t$, and $r > 0$ is the strength of the social
signal. As should be clear from the discussion above, prior work on
T-O markets with product visibilities (e.g.,
\cite{lerman2014leveraging,Rank,ICWSM16SI,IJCAI16SI}) focused on the
case of a linear social signal ($r = 1$). The primary objective of
this paper is to understand what happens to the T-O market when $r <
1$.

The paper contains both theoretical and simulation results and its
contributions can be summarised as follows:

\begin{enumerate}
\item When $r < 1$ and a static ranking is used, {\em the market converges
  to a unique equilibrium, which we characterise analytically}. In the
  equilibrium, the market shares depend only on the product qualities
  $q_i$ and no monopoly occurs. Moreover, a product of higher quality
  receives a larger market share than a product of lower quality,
  introducing a notion of fairness in the market and reducing the
  inequalities introduced by a linear social signal.

\item When $r > 1$ and a static ranking is used, the equilibria can be
  characterised similarly. However, contrary to the case $r < 1$, the
  equilibria that are not monopolies can be shown to be unstable under
  certain conditions. As a result, the market will typically go to a
  monopoly for some product: Which product wins the entire market
  share depends on the initial condition and the early dynamics.

\item Agent-based simulations show that the market converges quickly
  towards an equilibrium when using sublinear social signals and the
  quality ranking. They also show that the quality ranking outperforms
  the popularity ranking in maximising the efficiency of the
  market. The popularity ranking is also shown to have some
  significant drawbacks in some settings.
\end{enumerate}

\noindent
These theoretical results indicate that, when the social influence
signal is a sublinear function of the market share and a static
ranking of the products (e.g., the quality ranking) is used, the
market is entirely predictable, depends only on the product quality,
and does not lead to a monopoly. This contrasts with the case of $r=1$
where the market is entirely predictable but goes to a monopoly for
the product of highest quality (assuming the quality ranking)
\citep{ICWSM16SI} and the case of $r > 1$ where the market becomes
unpredictable (even with a static ranking). As a result, sublinear
social signals provide a way to balance market efficiency and the
inequalities introduced by social influence. In particular, with
sublinear social signals and a static ranking, markets do not exhibit
a  Matthew effect where the winner takes all, and remain predictable.

The remaining of this paper is organised as follows. Section
\ref{section:related} describes the related work. Section
\ref{section:market} introduces T-O markets and the generalised
multinomial logit model for consumer preferences considered
here. Section \ref{section:RMA} reviews some necessary mathematical
preliminaries, including the fact that T-O markets can be modeled as
Robbins-Monro algorithms.  Section \ref{section:equilibrium} derives
the equilibria for the market as a function of the social signal and
also presents the convergence results.  Section
\ref{section:simulation} reports the results from the agent-based
simulation. Section \ref{section:additional} discusses some additional
results on sublinear signals. Section \ref{section:conclusion}
discusses the results and concludes the paper.

\section{Related Work}
\label{section:related}

The research presented in this paper was motivated by the seminal work
of Salganik et al. \citep{salganik2006experimental}. They study an
experimental market called the \musiclab{}, where participants were
presented a list of unknown songs from unknown bands, each song being
described by its name and band. The participants were partitioned into
two groups exposed to two different experimental conditions: the {\em
  independent} condition and the {\em social influence} condition.  In
the independent group, participants were shown the songs in a random
order and they were allowed to listen to a song and then to download
it if they so wish.  In the second group (social influence condition),
participants were shown the songs in popularity order, i.e., by
assigning the most popular songs to the most visible
positions. Moreover, these participants were also shown a social
signal given by the number of times each song was downloaded. In order
to investigate the impact of social influence, participants in the
second group were distributed in eight ``worlds'' evolving completely
independently. In particular, participants in one world had no
visibility about the downloads and the rankings in the other
worlds. The \musiclab{} exemplifies a T-O market where each song
represents a product, and listening and downloading a song represent
trying and purchasing a product respectively. The results in
\cite{salganik2006experimental} show that different worlds evolve
differently from one another, and significantly so, providing evidence
that social influence may introduce unpredictability, inequalities,
and inefficiency in the market.

The results in \cite{salganik2006experimental} were reproduced by
numerous researchers (e.g.,
~\cite{salganik2009web,salganik2008leading,Muchnik2013,vandeRijt2014})
and, in particular, by Krumme et al. \cite{krumme2012quantifying} who
model the \musiclab{} experiment with a generalised multinomial logit
where product utilities depend on the song appeal, quality,
visibility, and a social influence signal representing past
purchases. The T-O market studied in this paper generalises the model proposed
by Krumme et al.,  exploring various strengths for the
social signal as indicated in Equation \ref{eq-pp}. The case of a
linear signal ($r=1$) has been given significant attention. Abeliuk et
al \cite{Rank} proposed the {\em performance ranking} which orders the
products optimally at each time $t$ given the appeals, qualities,
visibilities, and market shares. They show that, when the performance
ranking is used, the market always benefits from social influence in
expectation. Van Hentenryck et al. \cite{ICWSM16SI} study the quality
ranking which ranks the products by quality: They show that the
quality ranking and, more generally, any static ranking, always
benefits from social influence in expectation. They also prove that
the market converges almost surely to a monopoly for the
highest-quality product, indicating that the quality ranking is both
optimal and predictable asymptotically. These results extend
well-known theorems on P{\? o}lya urns and their generalisations
(e.g., \cite{PolyaUrnModels,Chung2003,Renlund2010}). Abeliuk et al.
\cite{IJCAI16SI} also show that the performance ranking converges to a
monopoly for the highest-quality product when a linear social signal
is used.

Ceyhan et al. \citep{Ceyhan11} study a general choice probability
$C_i^J(\phi^t)$, where $J$ represents the strength of the social
signal, and prove some general convergence results under some
assumptions. In particular, they use the ODE method \cite{Ljung77} and
a  Lyapunov function (e.g., \cite{Kushner03}) to prove that
the market converges to an equilibrium when the Jacobian of $C_i^J$ is
symmetric (which is not the case when product visibilities are
present). They also study in detail the case where the market follows
a logit model of the form
\[
C_i^J(\phi) = \frac{e^{J \phi_i + q_i}}{\sum_j e^{J \phi_j + q_j}}
\]
where $J$ is a constant capturing the strength of the social influence
signal. They show that there exists a parameter $J^*$ such that the
market converges toward a unique equilibrium when $J < J^*$ and to a
monopoly when $J \geq J^*$.  No analytical characterisation of the
equilibrium when $J < J^*$ is presented.

It is interesting to contrast these and our results. Observe first
that the proof technique used in \cite{Ceyhan11} relies on the fact
that the Jacobian of $C_i^J$ is symmetric, which is not the case for
T-O markets with product visibilities.  Our paper studies such T-O
markets and show that, when $0 < r < 1$ and a static ranking is used,
{\em the market converges to an inner equilibrium, which we
  characterise analytically.} When $r=1$, the T-O market converges to
a monopoly for the product with the highest value $v_{\sigma(i)} q_i$
\citep{Rank}. When $r > 1$, we show that the equilibria of the T-O
market are given by monopolies for each product and other type of
equilibria (e.g, a market share consisting on a distribution $60\%,
40\%, 0\%$ for a market with 3 products). {\em We prove that, when $r
  > 1$, the equilibria that are not monopolies are unstable (under
  certain conditions)}. As a result, the market will likely converge
to a monopoly for some product.

It is also useful to mention that different, theoretical and
experimental, approaches to the use of social influence are present in
the literature. For instance, Yuan and Hwarng \citep{yuan2016} describe
a demand-based pricing model under social influence and capture its
\rev{behaviour} with a dynamical system that evolve to some stable or chaotic
equilibria depending on the strength of the social signal. Stummer et
al \citep{stumm2015} introduces an agent-based model for repeat
purchase decisions addressing different types of innovation diffusion
and their perceived attributes; They also applied this methodology to
an application concerned with second-generation biofuel.

\section{The Trial-Offer Model}
\label{section:market}

The paper builds on the work by Krumme et al.
\cite{krumme2012quantifying} who propose a framework in which consumer
choices are captured by a multinomial logit model whose product
utilities depend on the product appeal, position bias, and a social
influence signal representing past purchases. A marketplace consists
of a set $N$ of $n$ items. Each item $i \in N$ is characterised by two
values:
\begin{enumerate}
\item its {\em appeal} $a_i > 0$ which represents the inherent preference
  of trying item $i$;

\item its {\em quality} $q_i > 0$ which represents the conditional
  probability of purchasing item $i$ given that it was tried.
\end{enumerate}

\noindent
This paper assumes that the appeals and the qualities are
known. Abeliuk et al. \cite{Rank} have shown that these values can be
recovered accurately and quickly, either before or during the market
execution using the approximation suggested by Krumme et al.:
\[
a_i\sim \frac{s_i}{\sum_j s_j}
\]
and
\[
q_i\sim \frac{d_i}{s_i}
\]
where $s_i$ and $d_i$ are the samplings and purchases of product $i$ at
some point in time.

The objective of the firm running this market is to maximise the total
expected number of purchases. To achieve this, the key managerial
decision of the firm is what is known as the ranking policy
\citep{Rank}, which consists in deciding how to display the products
in the market (e.g., where to display a product on a web page). Here
we assume that, at the beginning of the market, the firm decides upon
a ranking for the items, i.e., an assignment of items to positions in
the marketplace. Each position $j$ has a visibility $v_j$ which
represents the inherent probability of trying an item in position
$j$. A ranking $\sigma$ is a permutation of the items and
$\sigma(i)=j$ means that item $i$ is placed in position $j$ ($j \in
N$).  When a customer enters the market, she observes all the items
and their social signals based on the values of the previous purchases
$d^t = (d^t_1,\ldots,d^t_n)$.

The vector $\phi^t$ of market shares at time $t$ is computed in terms
of the vector $d^t$, i.e.,
\[
\phi_i^t = \frac{d_i^t}{\sum_{j=1}^n d_j^t}.
\]
and
\[
\phi^t \in \Delta^{n-1} = \{ x = (x_1,\ldots,x_n) \in \R^n \ \mid  \ 0 \leq x_i \leq 1 \mbox{ and } \sum_{i=1}^n x_i = 1 \}.
\]
The consumer then selects an item to try. The probability that the customer
tries item $i$ is given by $P_i(\sigma,\phi^t)$ where
\begin{equation}
\label{probatry}
P_i(\sigma,\phi) =  \frac{v_{\sigma(i)} f(\phi_i)}{\sum_{j=1}^n v_{\sigma(j)} f(\phi_j)}
\end{equation}
and $f$ is a continuous, positive, and nondecreasing function. This
probability generalises the multinomial logit model of Krumme et al.
\cite{krumme2012quantifying} who define two sets of probabilities,
$p^{SI}_{i,t}$ and $p^{I}_i$, that capture the probability of trying
product $i$ with and without social influence. These probabilities are
defined as:
\begin{equation}
\label{probkrum}
p^{SI}_{i,t}= \frac{v_{\sigma(i)}  (\alpha a_i+d_i^t)}{\sum_{j=1}^n
  v_{\sigma(j)} (\alpha a_j+d_j^t)},\qquad p_i^{I}=\frac{v_{\sigma(i)}a_i }{\sum_{j=1}^n
  v_{\sigma(j)}a_j},
  \end{equation}

\noindent
where $\alpha$ is a parameter to calibrate the strength of the social
signal (e.g., $\alpha=200$ for the \musiclab{} experiments).  Equation
\eqref{probatry} allows us to recover the formulae \eqref{probkrum}
via some linear transformation of the identity function:
$f(\phi_i)=\beta \phi_i+\alpha a_i, $ with $\beta=\sum_jd_j$ or
$\beta=0$ for each case.

After having tried product $i$, a customer decides whether to buy
the sampled item and the probability that she purchases item $i$ is
given by $q_i$.  If item $i$ is purchased at time $t$, then the
purchase vector becomes
\begin{equation*}
 d_j^{t+1} = \left\{
    \begin{array}{ll} d_j^t + 1 & \mbox{ if } j = i; \\
                      d_j^t     & \mbox{ otherwise.}
    \end{array} \right.
\end{equation*}
\noindent
For simplicity, the vector $d^0$ is initialised with the product
appeals\footnote{The initialisation can be justified by viewing the
  discrete dynamic process as an urn and balls process, where the
  appeals are the initial sets of balls.}, i.e., $d_i^0 = a_i$.  This
condition can be relaxed and the results presented in this paper
continue to hold but the notations are heavier (See Appendix
\ref{appendix:purchases} for the details). To analyse this process, we
divide time into discrete periods such that each new period begins
when a product has been purchased. Hence, the length of each time
period is not constant.\footnote{In Section 6, we  modify the notion of
  time period to analyse the efficiency of trial-offer markets.}


In this paper, we are interested in characterising how the market
shares $\{\phi^t\}_{t>0}$ evolve over time for various functions $f$
given a static ranking $\sigma$. We are particularly interested in
study the asymptotic \rev{behaviour} of $\{\phi^t\}_{t>0}$ for the cases
where $f(x)=x^r$ with $r>0$. For instance, when $r=0.5$, the social
signal displays the square root of the number of past purchases. For
notational simplicity, we assume that the ranking is fixed and is the
identity function $\sigma(i) = i$ and omit it from the formulas. If
the qualities and visibilities also satisfy $q_1 \geq \ldots \geq q_n$
and $v_1 \geq \ldots \geq v_n$, we obtain the quality ranking proposed
in \cite{ICWSM16SI} but our results hold for any static ranking.

The following lemma, whose proof is in Appendix \ref{proofs}, relates
the two phases of the T-O market and characterises the probability
that the next purchase is item $i$. It generalises a result in
\citep{ICWSM16SI}.

\begin{lemma}
  \label{probab} If $f:[0,1]\to \R$ is a positive function, then
  the probability $p_i(\phi)$ that the next purchase is the product
  $i$ given the market share vector $\phi$ is given by
\begin{equation}
p_i(\phi) \ = \frac{v_i q_i f(\phi_i)}{\sum_{j=1}^n v_j q_j f(\phi_j)} \label{probanext}.
\end{equation}
\end{lemma}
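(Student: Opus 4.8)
The plan is to compute the probability that the next purchase is product $i$ by conditioning on which product is tried, using the two-phase structure of the T-O market. In a single step of the process, a customer first tries some product $j$ (with probability $P_j(\sigma,\phi)$ given by Equation \eqref{probatry}), and then purchases it with probability $q_j$. The event "the next purchase is product $i$" is slightly more subtle because a customer may try a product and decline to purchase it, in which case no purchase occurs in that round and the process effectively repeats. So the next purchase being $i$ is the event that, among all rounds until the first successful purchase, the successful round is a trial-and-purchase of $i$.

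First I would write the probability that a given round results in a purchase of $i$ as $P_i(\sigma,\phi)\, q_i$, and the probability that a given round results in some purchase as $\sum_{j=1}^n P_j(\sigma,\phi)\, q_j$. Since the market share vector $\phi$ does not change during unsuccessful rounds, the conditional probability that the (eventual) next purchase is $i$, given that it occurs, is the ratio
\[
p_i(\phi) \;=\; \frac{P_i(\sigma,\phi)\, q_i}{\sum_{j=1}^n P_j(\sigma,\phi)\, q_j}.
\]
This is the standard "geometric race" argument: conditioning on eventual success, the identity of the successful product is distributed proportionally to its per-round success probability.

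Then I would substitute the explicit form of $P_i(\sigma,\phi)$ from Equation \eqref{probatry}. The common denominator $\sum_{k} v_{\sigma(k)} f(\phi_k)$ appears in both $P_i$ and every $P_j$, so it cancels from numerator and denominator, leaving
\[
p_i(\phi) \;=\; \frac{v_{\sigma(i)} f(\phi_i)\, q_i}{\sum_{j=1}^n v_{\sigma(j)} f(\phi_j)\, q_j},
\]
and under the convention $\sigma = \mathrm{id}$ adopted in the paper this is exactly Equation \eqref{probanext}. The positivity hypothesis on $f$ guarantees the denominator is strictly positive, so the expression is well-defined.

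**The main obstacle** is not computational but conceptual: one must justify carefully that the cancellation of unsuccessful rounds is legitimate — i.e., that $\phi$ is genuinely unchanged during failed trials and that the race terminates almost surely (which holds since each round succeeds with probability $\sum_j P_j q_j \ge \min_j q_j > 0$, as qualities are strictly positive). Once that framing is in place, the rest is the one-line cancellation above. Since the excerpt states this lemma generalises a result in \citep{ICWSM16SI}, I would also note that setting $f(x) = x$ recovers the linear-signal formula there, as a sanity check.
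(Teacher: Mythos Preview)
Your argument is correct and rests on the same underlying observation as the paper's proof: because the market share $\phi$ is frozen until a purchase occurs, the rounds are i.i.d.\ Bernoulli trials and the identity of the first success is governed by the per-round success probabilities. The paper, however, makes this explicit rather than invoking the ``geometric race'' fact: it writes the probability that the first purchase is product $i$ at round $m$ as
\[
p_i^{m\text{th}}(\phi)=\Bigl(1-a\Bigr)^{m-1}\,\frac{v_i f(\phi_i)}{\sum_j v_j f(\phi_j)}\,q_i,\qquad a=\frac{\sum_j v_j f(\phi_j)q_j}{\sum_j v_j f(\phi_j)},
\]
and then sums the geometric series $\sum_{m\ge 0}(1-a)^m=1/a$ to obtain \eqref{probanext}. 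Your conditioning shortcut collapses exactly this computation into one line; what it buys is brevity, while the paper's version is more self-contained for a reader who has not seen the race argument before. Either way the positivity of $f$ and of the $q_j$ is what guarantees $a>0$, i.e.\ almost-sure termination, which you correctly flag.
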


\rev{We finish this section with an important equivalence that arises when $f(x)=x^r$, $r>0$. Under such condition, if one writes down Equation \eqref{probanext} in terms of the number of purchases $d_i$ that product $i$ has so far obtained instead of using the market shares, we get exactly the same expression. Indeed,}

\begin{equation}
\rev{p_i(\phi)=\frac{v_iq_i(\phi_i)^r}{\sum_jv_jq_j(\phi_j)^r}=\frac{v_iq_i(\frac{d_i}{\sum_kd_k})^r}{\sum_jv_jq_j(\frac{d_j}{\sum_kd_k})^r}=\frac{v_iq_i(d_i)^r}{\sum_jv_jq_j(d_j)^r}.}
\end{equation}

\rev{Thus, when the social signal function $f$ is given by $f(x)=x^r$, $r>0$, it is possible to interpret our model either using the concept of market share (as we describe in Section 3) or simply using the current number of purchases.}

\section{Trial-Offer Markets as Robbins-Monro Algorithms}
\label{section:RMA}

This section establish some basic definitions and concepts that are
useful in the rest of the paper. In particular, it shows that T-O
markets can be modeled as Robbins-Monro algorithms and states some
useful results. The results in this section are well-known in
stochastic approximation. The section starts with a brief introduction
of Ordinary Differential Equations (ODE) and some stability criteria
(e.g., see \cite{Hirsch2012,jordan1999nonlinear}).

\paragraph{Differential Equations:}

\noindent
Consider the following differential equation
\begin{equation}
\label{cauchy}
\frac{dy}{dt}= F(y)
\end{equation}
where $F$ is some vector field. The concept of equilibrium is central
in the study of asymptotic \rev{behaviour} for this type of equation.

\begin{definition}
\label{stab}
A vector $y^*\in\R^n$ is an equilibrium for differential equation \eqref{cauchy} if $F(y^*)=0$.
\end{definition}

\noindent
We are interested in equilibria that satisfy (at least) certain
stability criteria.


\begin{definition}
 An equilibrium $y^*$ is said to be {\it stable} for Equation
  \eqref{cauchy} if, given $\epsilon > 0$, there exists $\delta > 0$
  such that $\| y(t) - y^* \| < \epsilon$ for all $t > 0$ and for
  all $y$ such that $\|y - y^*\| < \delta$.
  We say that $y^*$ is {\it \rev{asymptotically} stable} if \rev{it} also satisfies \[
\lim_{t\to\infty}y(t)=y^*.
\]
\end{definition}

\begin{remark}
When an equilibrium $y^*$ is not stable, we say that $y^*$ is unstable.
\end{remark}

\noindent
The asymptotic stability of an equilibrium $y^*$ can be characterised
in terms of the Jacobian matrix $JF(y^*)=(\frac{\partial
  F_i(y^*)}{\partial y_j})_{i,j}$ $(i,j \in N)$ as stated by the
following Theorem (see, for instance, \cite{jordan1999nonlinear}
p. 440).

\begin{theorem}
\label{ODEunst}
Let $y^*$ be an equilibrium for the differential equation
\eqref{cauchy}. If the eigenvalues of the Jacobian matrix $JF(y^*)$
all have negative real part, then $y^*$ is asymptotically
stable. If, on the other hand, $JF(y^*)$ has at least one eigenvalue
with a positive real part, then $y^*$ is unstable.
\end{theorem}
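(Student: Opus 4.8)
\medskip
\noindent\emph{Plan of proof.} This is the classical principle of linearised stability (Lyapunov's first method), so the plan is to compare the flow of \eqref{cauchy} near $y^*$ with that of its linearisation and to absorb the nonlinear part into a small perturbation. First I would normalise the problem: set $x(t)=y(t)-y^*$ and, using $F(y^*)=0$ together with the differentiability of $F$ at $y^*$, write $F(y^*+x)=Ax+R(x)$ with $A:=JF(y^*)$ and $R(x)=o(\|x\|)$ as $x\to 0$ (equivalently, for every $\eta>0$ there is $\rho>0$ with $\|R(x)\|\le\eta\|x\|$ for $\|x\|\le\rho$). The system becomes $\dot x=Ax+R(x)$, and $y^*$ is (asymptotically) stable, resp.\ unstable, for \eqref{cauchy} exactly when the origin is for this system.

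For the first assertion, assume every eigenvalue of $A$ has negative real part. Then $\|e^{At}\|\le Me^{-\alpha t}$ for some $M,\alpha>0$, so $P:=\int_0^\infty e^{A^\top t}e^{At}\,dt$ converges and is the unique symmetric positive definite solution of the Lyapunov equation $A^\top P+PA=-I$. I would then take $V(x)=x^\top Px$ as a Lyapunov function; along trajectories,
\[
\dot V \;=\; x^\top(A^\top P+PA)x+2\,x^\top PR(x)\;=\;-\|x\|^2+2\,x^\top PR(x).
\]
Choosing $\eta$ small enough in the remainder estimate (e.g.\ $\eta=1/(4\|P\|)$) gives $\dot V\le-\tfrac12\|x\|^2\le -cV$ on the ball $\|x\|\le\rho$, with $c:=1/(2\lambda_{\max}(P))>0$. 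The standard Lyapunov stability theorem then yields Lyapunov stability, and Gr\"onwall's inequality applied to $t\mapsto V(x(t))$ gives $V(x(t))\le V(x(0))e^{-ct}$, hence $x(t)\to 0$; so the origin is asymptotically stable.

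For the second assertion, assume $A$ has an eigenvalue with positive real part. Split $\R^n=E^+\oplus E^-$ into the $A$-invariant generalised eigenspaces associated, respectively, with the eigenvalues of positive real part and with those of non-positive real part, and write $x=(u,w)$, $u\in E^+$, $w\in E^-$. Using the spectral gap one can choose inner products, i.e.\ positive definite forms $P_+$ on $E^+$ and $P_-$ on $E^-$, so that $A|_{E^+}$ is uniformly expanding and $A|_{E^-}$ expands at most by an arbitrarily small rate. With the Chetaev function $V(x)=\|u\|_{P_+}^2-\|w\|_{P_-}^2$, the linear part of $\dot V$ dominates a positive multiple of $\|x\|^2$ on the open cone $C=\{\,V>0,\ 0<\|x\|\le\rho\,\}$; after shrinking $\rho$ so that the $o(\|x\|)$ remainder is absorbed, one gets $\dot V>0$ throughout $C$, which is nonempty and has the origin on its boundary. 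Chetaev's instability theorem then forces every trajectory starting in $C$ to leave $\{\|x\|\le\rho\}$ in finite time; since $C$ meets every neighbourhood of the origin, the origin is unstable. Undoing the translation gives both claims for $y^*$.

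The main obstacle is not the linear algebra (the Lyapunov/Chetaev quadratic forms) but the uniform control of the nonlinear remainder $R$: one must check that the radius $\rho$ on which $\dot V$ has the prescribed sign can be chosen independently of the trajectory, and, in the unstable case, that the cone $C$ is forward-invariant up to the exit time. An alternative for the instability part, which avoids Chetaev's theorem, is to invoke the Hartman--Grobman theorem when $A$ is hyperbolic, or the unstable manifold theorem in general; these are heavier, and since the statement is a textbook fact one may also simply cite \cite{jordan1999nonlinear}.
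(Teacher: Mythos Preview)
Your proof outline is correct and follows the standard Lyapunov/Chetaev route for the principle of linearised stability. However, the paper does not prove this theorem at all: it is stated in Section~\ref{section:RMA} as a well-known background result with a pointer to \cite{jordan1999nonlinear}, p.~440, and no argument is given. So rather than taking a different approach, you have supplied a full proof sketch where the paper simply cites the literature---which you yourself note is a legitimate option at the end of your proposal. For the purposes of this paper, the citation is all that is needed; your argument is sound but superfluous here.
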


\noindent
A well-known result from linear algebra (see, for example,
\cite{robinson2006algebra} p. 296) establishes the connection between
the trace of a square matrix and its eigenvalues. In the following, we
use $tr(A)=\sum_{i}^n a_{ii}$ to denote the trace of matrix $A$ where
$a_{ii}$ are the diagonal entries of matrix $A$.

\begin{theorem}
\label{trace}
Let $A$ be a $n\times n$ matrix, and $\lambda_1,..., \lambda_n$ its
eigenvalues. Then $tr(A)=\sum_{i=1}^n\lambda_i$.
\end{theorem}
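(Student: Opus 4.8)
The plan is to prove Theorem~\ref{trace} by writing down the characteristic polynomial $p(\lambda)=\det(\lambda I - A)$ in two ways and matching the coefficient of $\lambda^{n-1}$. On one side, expanding $p$ directly via the Leibniz (permutation-sum) formula for the determinant exposes $\mathrm{tr}(A)$; on the other side, factoring $p$ over $\Co$ into linear factors indexed by the eigenvalues exposes $\sum_i \lambda_i$. Comparing the two gives the identity.

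First I would expand $p(\lambda)=\det(\lambda I - A)$ using the permutation-sum formula. Each term is, up to sign, a product of $n$ entries of $\lambda I - A$, one from each row and column; the diagonal entries are $\lambda - a_{ii}$ and the off-diagonal entries are the constants $-a_{ij}$. A term indexed by a permutation that does not fix all of $\{1,\dots,n\}$ must miss at least two diagonal factors (a permutation fixing $n-1$ indices fixes all $n$), hence contributes a polynomial of degree at most $n-2$ in $\lambda$. Therefore only the identity permutation contributes to the top two coefficients, and its contribution is $\prod_{i=1}^{n}(\lambda - a_{ii})$. Expanding this product, the coefficient of $\lambda^{n-1}$ is $-\sum_{i=1}^{n} a_{ii} = -\,\mathrm{tr}(A)$, so $p(\lambda) = \lambda^{n} - \mathrm{tr}(A)\,\lambda^{n-1} + (\text{terms of degree} \le n-2)$.

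Next I would invoke the fundamental theorem of algebra: since $p$ is monic of degree $n$, it factors over $\Co$ as $p(\lambda)=\prod_{i=1}^{n}(\lambda-\lambda_i)$, where $\lambda_1,\dots,\lambda_n$ are exactly the eigenvalues of $A$ counted with algebraic multiplicity. Expanding this product and collecting the coefficient of $\lambda^{n-1}$ gives $-\sum_{i=1}^{n}\lambda_i$. Equating the two expressions for that coefficient yields $\mathrm{tr}(A)=\sum_{i=1}^{n}\lambda_i$.

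Both coefficient extractions are routine; the only point needing care is the combinatorial claim in the first step, namely that no non-identity permutation can contribute to the $\lambda^{n-1}$ coefficient, so that is where I would be fully explicit. A shorter alternative avoids polynomial bookkeeping altogether: by Schur's triangularization theorem one can write $A = U T U^{*}$ with $U$ unitary and $T$ upper triangular having the eigenvalues of $A$ on its diagonal, and since $\mathrm{tr}(BC)=\mathrm{tr}(CB)$ the trace is similarity-invariant, whence $\mathrm{tr}(A)=\mathrm{tr}(T)=\sum_{i=1}^{n}\lambda_i$.
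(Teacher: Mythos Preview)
Your proof is correct: the coefficient-matching argument for the characteristic polynomial is the standard route, and your justification that non-identity permutations contribute only to degrees $\leq n-2$ is the right observation. The paper does not actually prove this theorem; it simply quotes it as a well-known linear algebra fact with a textbook citation, so your proposal supplies strictly more than the paper does.
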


\noindent
We now show that the discrete stochastic process $\{\phi^t\}_{t \geq
  0}$ can be modeled as a Robbins-Monro Algorithm (RMA)
\citep{Kushner03,Duflo97}.

\begin{definition}[Robbins-Monro Algorithm] A  {\it Robbins-Monro
    Algorithm} (RMA) is a discrete time stochastic processes
  $\{x^t\}_{t \geq 0}$  whose general structure is specified by
\begin{equation}
\label{RMA}
x^{k+1}-x^k=\gamma^{k+1}[F(x^k)+U^{k+1}],
\end{equation}
\noindent	
where
\begin{itemize}
\item $x^k$ takes its values in some Euclidean space (e.g., $\R^n$);
\item $\gamma^k$ is deterministic and satisfies $\gamma^k>0$,
  $\sum_{t \geq 1}\gamma^t=\infty$, and $\lim_{t \to\infty}\gamma^t=0$ with probability 1;
\item $F:\R^n\to\R^n$ is a deterministic continuous vector field;
\item $\EE[U^{k+1}| \Fk]=0$, where $\Fk$ is the natural filtration of
  the entire process.\footnote{$\Fk$, the natural filtration, is the
    $\sigma$-field generated by the history $\{x^l: l\leq k\}$ }.
\end{itemize}
\end{definition}

\noindent
A RMA $\{x^t\}_{t \geq 0}$ where $x^t$ has $n$ coordinates is said to
be $n$-dimensional. Recall that the probability that the next
purchase is item $i$ at time $k$ is given by $p_i(\phi^k)$ and denote
by $e^k$ the random unit vector whose $j^{th}$ entry is 1 if item $j$
is the next purchase and 0 otherwise. The market share at time $k+1$
is given by
\[
\phi^{k+1} = \frac{D^k \phi^k}{D^k + 1} + \frac{e^k}{D^k + 1}
\]
where $D^k = \sum_{t=0}^k \sum_{i=1}^n d_i^t=k+k_0, \text{where } k_0=\sum_{i=1}^na_i$. It follows that
\begin{align*}
\phi^{k+1} & =  \frac{(D^k+1) \phi^k}{D^k + 1} - \frac{\phi^k}{D^k + 1}  + \frac{e^k}{D^k + 1} \\
          & =  \phi^k + \frac{1}{D^k + 1} (e^k - \phi^k) \\
          & =  \phi^k + \frac{1}{D^k + 1} (\EE[e^k | \Fk] - \phi^k + e^k - \EE[e^k | \Fk]) \\
          & =  \phi^k + \frac{1}{D^k + 1} (p(\phi^k) - \phi^k + e^k - \EE[e^k | \Fk]).
\end{align*}
This last equality can be reformulated as
\begin{equation}
\label{RMA-MS}
\phi^{k+1} = \phi^k + \gamma^{k+1} (F(\phi^k) + U^{k+1})
\end{equation}
where $\gamma^{k+1} = \frac{1}{D^k + 1}$, $F(\phi) = p(\phi) - \phi$,
and $U^{k+1} = e^k - \EE[e^k | \Fk]$. Note that the function $F$
captures the difference between the probabilities of purchasing the
items (given the market shares) and the market shares at each time
step. Recall that $\phi^k\in\Delta^{n-1}$ for all $k\geq 0$, which is
a compact, convex subset of $\R^n$ (and hence connected). We can now
prove that the discrete dynamic process $\{\phi^t\}_{t \geq 0}$ can be
modeled as a Robbins-Monro algorithm.

\begin{theorem} The discrete stochastic dynamic process
  $\{\phi^k\}_{k\geq0}$ can be modeled as the Robbins-Monro algorithm.
\end{theorem}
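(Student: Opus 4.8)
The plan is simply to check that the reformulation \eqref{RMA-MS} derived just above fits, term by term, the definition \eqref{RMA} of a Robbins-Monro algorithm, under the identification $x^k=\phi^k$, $\gamma^{k+1}=\frac{1}{D^k+1}$, $F(\phi)=p(\phi)-\phi$, and $U^{k+1}=e^k-\EE[e^k\mid\Fk]$. Since the algebraic manipulation producing \eqref{RMA-MS} has already been carried out in the text, all that remains is to confirm the four bullet points of the definition in turn.

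First, the state-space requirement is immediate: $\phi^k\in\Delta^{n-1}\subset\R^n$ for every $k$, so the process takes values in a Euclidean space. Second, for the gains, I would use the deterministic initialisation $d_i^0=a_i$, which yields $D^k=k+k_0$ with $k_0=\sum_i a_i>0$, hence $\gamma^{k+1}=\frac{1}{k+k_0+1}$; this sequence is deterministic and strictly positive, its sum diverges (a tail of the harmonic series), and it tends to $0$, so all the step-size conditions hold. Third, continuity of the vector field: by Lemma \ref{probab}, $p_i(\phi)=\frac{v_iq_if(\phi_i)}{\sum_j v_jq_jf(\phi_j)}$, and since $f$ is continuous and strictly positive the denominator is continuous and bounded away from zero on the compact simplex, so each $p_i$, and therefore $F=p-\phi$, is continuous on $\Delta^{n-1}$. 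Fourth, the martingale-difference property: $\phi^k$ is $\Fk$-measurable, and Lemma \ref{probab} says precisely that $\EE[e^k\mid\Fk]=p(\phi^k)$, whence $\EE[U^{k+1}\mid\Fk]=\EE[e^k\mid\Fk]-\EE[e^k\mid\Fk]=0$.

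There is no genuine obstacle in this argument — it is a routine verification once \eqref{RMA-MS} is in hand. The only points deserving a word of care are that the gains $\gamma^{k+1}$ are deterministic, which hinges on the deterministic initialisation so that $D^k=k+k_0$ exactly (this is why the more general initialisation is deferred to an appendix), and that the positivity of $f$ is exactly what guarantees $F$ is well-defined and continuous on all of $\Delta^{n-1}$.
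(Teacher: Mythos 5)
Your proof is correct and follows essentially the same route as the paper: it takes the reformulation \eqref{RMA-MS} derived just before the theorem and verifies each condition in the definition of a Robbins--Monro algorithm, with the identifications $\gamma^{k+1}=\frac{1}{D^k+1}=\frac{1}{k+k_0+1}$, $F(\phi)=p(\phi)-\phi$, and $U^{k+1}=e^k-\EE[e^k\mid\Fk]$. The extra care you take over the determinism of the gains (via $d_i^0=a_i$) and the continuity of $F$ (via positivity of $f$) is welcome but does not change the argument, which is the same routine verification the paper gives.
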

\begin{proof}
  The above derivation showed that $\{\phi^k\}_{k \geq 0}$ can be
  expressed through Equation \eqref{RMA-MS}, i.e.,
\[
\phi^{k+1} = \phi^k + \gamma^{k+1} (F(\phi^k) + U^{k+1})
\]
where $\gamma^{k+1} = \frac{1}{D^k + 1}=\frac{1}{k+k_0+1}$, $F(\phi) = p(\phi) - \phi$,
and $U^{k+1} = e^k - \EE[e^k | \Fk]$. It is easy to see that
$\gamma^k>0$, $\sum_{k \geq 1}\gamma^k=\infty$, $\lim_{k
  \to\infty}\gamma^k=0$, and that  $\EE[U^{k+1}|\Fk]$ is
equal to zero.
\end{proof}

\noindent
Robbins-Monro algorithms are particularly interesting because, under
certain conditions on $x^k$, $\gamma^k$, and $U^{k+1}$, their asymptotic
\rev{behaviour}, i.e., the values of $x^k$ when $k\to\infty$, is closely
related to the asymptotic \rev{behaviour} of the following continuous dynamic
process:
\begin{equation}
\label{RMC}
\frac{dx^t}{dt}=F(x^t), \quad x^t\in\Delta^{n-1}.
\end{equation}

\noindent
This idea, called {\it the ODE Method}, was introduced by
\cite{Ljung77} and has been extensively studied (e.g.,
\citep{Borkar00,Duflo97,Kushner03}). Consider again the RMA $\{x^k\}_{k \geq
  0}$ defined in (\ref{RMA}) and the following hypotheses:
\begin{itemize}
\item[H1:] $\dis \sup_k\EE[\|U^{k+1}\|^2]<\infty$;
\item[H2:] $\dis \sum_k(\gamma^k)^{2}<\infty$;
\item[H3:] $\dis \sup_k\|x^k\|<\infty$.
\end{itemize}

\noindent
We will now present a theorem establishing the connection between the
discrete stochastic process \eqref{RMA} and the continuous process
defined by \eqref{RMC}. This connection requires the concept of
Internally Chain Transitivity (ICT) sets. These ICT sets include
equilibria, periodic orbits of (\ref{RMC}), and possibly more
complicated sets.

To define ICT sets formally for the purpose of this paper, we use
Proposition 5.3 in \cite{Benaim99} that proves that the concepts of
internally chain recurrent and internally chain transitive set are
equivalent when the set over which $F$ is defined is connected, which
is obviously the case here.

\begin{definition}[$(\epsilon,T)$-Chains \cite{conley1978isolated}]
  Consider $\epsilon>0$, $T>0$, a set $A\subset\R^n$, and two points
  $x,y\in A$. There is an $(\epsilon,T)$-chain of length $k$ in $A$ between $x$ and
  $y$ if there exist $k$ solutions $\{y_1,...,y_k\}$ of \eqref{RMC}
  and their associated times $\{t_1,...,t_k\}$ with $t_i > T$ such
  that
\begin{enumerate}
\item[1.] $y_i^t\in A$  for all $t\in [0,t_i]$ and for all $i\in\{1,...,k\};$
\item[2.] $\|y_i^{t_i}-y_{i+1}^0\|< \epsilon$  for all $i\in\{1,...,k-1\};$
\item[3.] $\|y_1^0-x\|<\epsilon$ and $\|y_k^{t_k}-y\|<\epsilon.$
\end{enumerate}
\end{definition}

\noindent
We are now in a position to define ICT sets, which is derived from the
definition of Internally Chain Recurrent sets introduced by Conley
(1978) \cite{conley1978isolated}.

\begin{definition}[ICT Sets]
  A closed set $A$ is said {\it Internally Chain Transitive (ICT)} for
  the dynamics \eqref{RMC} if it is compact, connected, and for all
$\epsilon>0, T>0$ and $x,y\in A$, there exists an $(\epsilon,T)$-chain
in $A$ between $x$ and y.
\end{definition}


%
%

\noindent
The following theorem, due to \cite{Benaim99} and whose proof is in
Appendix \ref{proofs}, links the \rev{behaviour} of the limit set
$L\{x^k\}_{k \geq 0}$ of any sample path $\{x^k\}_{k \geq 0}$ for
Equation \eqref{RMA} and the limit sets of the solution to Equation
\eqref{RMC}.

\begin{theorem}[\cite{Benaim99}]
\label{thm:ICT}
Let $\{x^k\}_{k \geq 0}$ be a Robbins-Monro algorithm \eqref{RMA}
satisfying hypotheses $H1- H3$ where $F$ is a bounded locally Lipschitz vector field (e.g., a bounded
$\mathcal{C}^{1}$ function). Then, with probability 1, the limit set
$L\{x^k\}_{k\geq 0}$ is internally chain transitive for Equation
\eqref{RMC}.
\end{theorem}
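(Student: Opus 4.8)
The plan is to follow the classical route of Bena{\"\i}m, reducing the theorem to two ingredients: (i) the interpolated iterates shadow the flow of \eqref{RMC} over bounded time windows, and (ii) the $\omega$-limit set of such a shadowing trajectory is automatically internally chain transitive. First I would pass to continuous time. Define the \emph{effective time} $\tau_0=0$, $\tau_n=\sum_{k=1}^{n}\gamma^{k}$, which tends to $+\infty$ by the hypotheses on $\gamma^k$, and let $\bar X\colon[0,\infty)\to\R^n$ be the piecewise-affine interpolation with $\bar X(\tau_n)=x^n$. Since $\bar X$ is continuous and, by H3, bounded, its $\omega$-limit set is nonempty, compact and connected, and it has the same accumulation points as the sequence $\{x^k\}_{k\ge0}$; hence $L\{x^k\}_{k\ge0}$ equals the $\omega$-limit set of $\bar X$. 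Let $\Phi$ denote the semiflow generated by \eqref{RMC}, which is well defined on the relevant compact region because $F$ is bounded and locally Lipschitz. The core of the proof is to show that, with probability $1$, $\bar X$ is an \emph{asymptotic pseudotrajectory} of $\Phi$:
\[
\lim_{t\to\infty}\ \sup_{0\le s\le T}\ \bigl\|\bar X(t+s)-\Phi_s\bigl(\bar X(t)\bigr)\bigr\|\ =\ 0\qquad\text{for every }T>0 .
\]

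To prove this, fix $T>0$ and compare, on a window of effective length $T$ beginning at $\tau_n$, the interpolation $\bar X$ with the explicit Euler reconstruction of the flow; the discrepancy splits into a discretisation error and an accumulated-noise error. The noise error is controlled by the martingale $M^{n}=\sum_{k=1}^{n}\gamma^{k}U^{k}$: since $\EE[U^{k+1}\mid\Fk]=0$ and, by H1--H2, $\sum_{k}\EE[\|\gamma^{k}U^{k}\|^2]\le\bigl(\sup_k\EE[\|U^{k+1}\|^2]\bigr)\sum_{k}(\gamma^{k})^2<\infty$, the process $\{M^n\}$ is an $L^2$-bounded martingale, hence a.s.\ convergent by Doob's theorem, so its increments over any window of bounded effective length tend to $0$. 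The discretisation error is the standard Euler error for $\dot x=F(x)$: since $F$ is bounded and --- by local Lipschitzness together with H3 --- Lipschitz on the compact closure of the trajectory, and since $\gamma^k\to0$, a Gr\"onwall estimate bounds this error over a window of length $T$ by a quantity vanishing as $n\to\infty$. Adding the two bounds and taking the supremum over the window gives the displayed limit. \textbf{This martingale-plus-Gr\"onwall estimate is the main obstacle}; the rest is bookkeeping.

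Finally, it remains to deduce the ICT property from the asymptotic-pseudotrajectory property, a deterministic statement I would invoke from Bena{\"\i}m \cite{Benaim99} rather than reprove in full. For completeness, the mechanism is as follows. Write $L=L\{x^k\}_{k\ge0}$ and fix $x,y\in L$ and $\epsilon,T>0$. Pick $t_0$ so large that, for $t\ge t_0$, the trajectory $\bar X(t)$ stays within $\epsilon/3$ of $L$ and, over every window of length at most $3T$, remains within $\epsilon/3$ of some genuine solution of \eqref{RMC}. Since $x,y\in L$, after $t_0$ the trajectory $\bar X$ visits the $\epsilon/3$-neighbourhood of $x$ and then, more than $T$ effective-time units later, the $\epsilon/3$-neighbourhood of $y$. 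On each sub-window the pseudotrajectory is $\epsilon/3$-close to a true solution of length $>T$ that remains within $\epsilon$ of $L$; concatenating these solution segments, with jumps of size $<\epsilon$ between consecutive ones, yields an $(\epsilon,T)$-chain from a point near $x$ to a point near $y$. A routine perturbation argument, using compactness, invariance and connectedness of $L$, relocates this chain inside $L$, which is exactly the ICT property for \eqref{RMC}.
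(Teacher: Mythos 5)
Your proposal is correct and follows essentially the same route as the paper: interpolate the iterates on the effective time scale, show via the $L^2$-bounded martingale $\sum_k\gamma^kU^k$ (from H1--H2) together with a Gr\"onwall/Euler estimate that the interpolation is almost surely a precompact asymptotic pseudotrajectory of the flow of \eqref{RMC}, and then invoke Bena{\"\i}m's result that limit sets of such pseudotrajectories are internally chain transitive. The only difference is one of packaging: the paper cites Propositions 4.1 and 4.2 and Theorem 5.7(i) of \cite{Benaim99} as black boxes, whereas you unfold their internal arguments.
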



\noindent
Note that Theorem \ref{thm:ICT} is valid for very general functions
$F(x)=p(x)-x$, as the only requirement is to be locally Lipschitz.
Since hypotheses H1, H2, and H3 are all satisfied by the discrete
stochastic dynamic process $\{\phi^t\}_{t\geq0}$, it remains to study
the structure of the ICT set of Equation \eqref{RMC}. This paper
focuses on the cases where the social signal $f(x)$ is given by
$f(x)=x^r$, with $r>0$. We will show that the ICT set of Equation
\eqref{RMC} only consists of equilibria.

\section{Equilibria of Trial-Offer Markets}
\label{section:equilibrium}

This section characterises the equilibria and the asymptotic \rev{behaviour} of the continuous dynamics
\begin{equation}
\label{RMC-MS}
\frac{d\phi^t}{dt}= p(\phi^t) - \phi^t, \quad (\phi^t \in \Delta^{n-1}),
\end{equation}
which is associated with the RMA \eqref{RMA-MS}. For simplicity, we remove the
visibilities by stating $\overline{q}_j = v_jq_j$. We are interested in the case where $f(x)=x^r$ with $(r>0, r\neq 1)$, since the case
$r=1$ has been settled in earlier work. \rev{Let $Q$ be  the set of positive market shares, this is, $Q=\{i\in N: \phi_i\neq 0\}$, clearly $Q\neq \emptyset$ since $\sum_{i=1}^n\phi_i=1$}.
	
\begin{theorem}
\label{eqs}
Let $f(x)=x^r, r>0$, and $r\neq 1$. Any equilibria $\phi$ for Equation \eqref{RMC-MS} has coordinates
\[
\phi_i=\dfrac{\overline{q}_i^{\frac{1}{1-r}}}{\sum_{j\in Q}
  \overline{q}_j^{\frac{1}{1-r}}} \quad \mbox{ if } i\in Q 
\]
and zero otherwise (i.e., if $i \in N\setminus Q$).
\end{theorem}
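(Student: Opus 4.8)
Here is how I would prove Theorem~\ref{eqs}. The plan is a direct computation. By Definition~\ref{stab}, an equilibrium of \eqref{RMC-MS} is a point $\phi\in\Delta^{n-1}$ with $p(\phi)-\phi=0$, i.e. $p_i(\phi)=\phi_i$ for every $i\in N$. Substituting $f(x)=x^r$ into \eqref{probanext}, this says
\[
\phi_i=\frac{\overline{q}_i\,\phi_i^{\,r}}{\sum_{j=1}^n \overline{q}_j\,\phi_j^{\,r}}\qquad (i\in N).
\]
First I would set $S:=\sum_{j=1}^n \overline{q}_j\,\phi_j^{\,r}$ and record that $S>0$: since $\phi\in\Delta^{n-1}$ the support set $Q$ is nonempty, and for $j\in Q$ we have $\overline{q}_j>0$ and $\phi_j^{\,r}>0$ (this is where $r>0$ enters, so that $0^r=0$ does no harm). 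Thus the fixed-point equation is equivalent to $S\,\phi_i=\overline{q}_i\,\phi_i^{\,r}$ for every $i$.

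Next I would split according to whether $i\in Q$. If $i\notin Q$ then $\phi_i=0$ and the identity $S\cdot 0=\overline{q}_i\cdot 0^r$ holds automatically, so such coordinates are forced to be zero but otherwise unconstrained. If $i\in Q$ then $\phi_i\neq 0$, so dividing $S\,\phi_i=\overline{q}_i\,\phi_i^{\,r}$ by $\phi_i$ yields $\phi_i^{\,1-r}=\overline{q}_i/S$; because $r\neq 1$ the exponent $1-r$ is nonzero, and since $\overline{q}_i>0$ and $S>0$ I may raise both sides to the power $1/(1-r)$ to get
\[
\phi_i=\left(\frac{\overline{q}_i}{S}\right)^{\frac{1}{1-r}}=\frac{\overline{q}_i^{\frac{1}{1-r}}}{S^{\frac{1}{1-r}}}\qquad(i\in Q).
\]

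Finally I would determine the common normalising factor $S^{1/(1-r)}$ by summing over the support: since $\phi\in\Delta^{n-1}$ and $\phi_i=0$ for $i\notin Q$, we have $\sum_{i\in Q}\phi_i=1$, hence $S^{1/(1-r)}=\sum_{j\in Q}\overline{q}_j^{\,1/(1-r)}$, which is exactly the claimed expression $\phi_i=\overline{q}_i^{1/(1-r)}\big/\sum_{j\in Q}\overline{q}_j^{1/(1-r)}$ for $i\in Q$ and $\phi_i=0$ otherwise. I do not expect a genuine obstacle here: the proof is essentially bookkeeping, and the only points requiring a moment's care are that the denominator $S$ never vanishes (so that the division and the fractional power are legitimate) and that the hypothesis $r\neq 1$ is precisely what makes the exponent $1/(1-r)$ meaningful. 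Note also that the statement only characterises the \emph{form} of any equilibrium given its support $Q$; it does not assert existence for a prescribed $Q$ or uniqueness across different supports, so nothing further is needed.
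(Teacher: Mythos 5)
Your proof is correct and follows essentially the same computation as the paper's: write down the fixed-point condition $p_i(\phi)=\phi_i$, divide by $\phi_i$ on the support $Q$ (using $r\neq 1$ to invert the exponent), and normalise by summing over $Q$. The only difference is that the paper additionally verifies the converse --- that the resulting vector is indeed an equilibrium --- which, as you correctly observe, the statement as phrased (``any equilibrium has coordinates \dots'') does not strictly require.
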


\begin{proof}
  An equilibrium to \eqref{RMC-MS} must satisfy $p_i(\phi)=\phi_i$,
  i.e.,
 \[
\frac{\overline{q}_i(\phi_i)^r}{\sum_{j=0}^n\overline{q}_j(\phi_j)^r}=\phi_i.
\]
For $i\in Q$, we have
\[
\overline{q}_i(\phi_i)^{r-1}=\sum_{j\in Q}\overline{q}_j(\phi_j)^r
\]
and, for all $i , k \in Q$, we also have
\[
 \overline{q}_i(\phi_i)^{r-1}=\sum_{j\in Q}\overline{q}_j(\phi_j)^r= \overline{q}_k(\phi_k)^{r-1}
\]
which is equivalent to
\begin{equation}
\label{eq:equilibrium}
\overline{q}_i(\phi_i)^{r-1}=\overline{q}_k(\phi_k)^{r-1} \Leftrightarrow \phi_i=\left(\frac{\overline{q}_k}{\overline{q}_i}\right)^{\frac{1}{r-1}}\phi_k.
\end{equation}
\noindent
By summing for all $i\in Q$, we obtain
\[
1=\sum_{i\in Q}\phi_i=\frac{\phi_k}{\overline{q}_k^{1/(1-r)}}\sum_{i\in Q}\overline{q}_i^{1/(1-r)}
\]
and hence
\[
\phi_k=\frac{\overline{q}_k^{1/(1-r)}}{\sum_{i\in Q}\overline{q}_i^{1/(1-r)}}.
\]
It remains to prove $\phi$ is indeed an equilibrium, i.e.,
$p(\phi)=\phi$. This is equivalent to prove that
$p_i(\phi)=\phi_i$ for all $i \in \{1,\ldots,n\}$. The result is
obvious if $i \in N\setminus Q$ ($\phi_i=0 \Rightarrow$ $p_i(\phi)=0$). If $i \in Q$, then
\begin{align*}
p_i(\phi)&=\frac{\overline{q}_i(\phi_i)^r}{\sum_{j\in Q}\overline{q}_j(\phi_j)^r}\\\
&=\frac{\overline{q}_i(\oq_i^{1/(1-r)})^r}{\sum_{j\in Q}\overline{q}_j(\oq_j^{1/(1-r)})^r}*\frac{\left(\sum_{j\in Q}\oq_j^{1/(1-r)}\right)^r}{\left(\sum_{j\in Q}\oq_j^{1/(1-r)}\right)^r}\\
&=\frac{\oq_i^{[1+r/(1-r)]}}{\sum_{j\in Q}\oq_j^{[1+r/(1-r)]}}\\
&=\frac{\oq_i^{1/(1-r)}}{\sum_{j\in Q}\oq_j^{1/(1-r)}}=\phi_i.
\end{align*}
\end{proof}
\noindent	
Note that, when $|Q|=n$, the equilibrium lives in the interior of the
simplex $int(\Delta^{n-1})$ (all its coordinates are strictly
positive). We use $\phi^*$ to denote this equilibrium. When $|Q|=1$,
then the equilibrium is one of the vertices of the simplex. Finally,
the cases $1<|Q|<n$ cover the other possible equilibria (for example
$\phi=(3/4,1/4,0,\dots,0)$).

Observe also that the equilibrium $\phi^*\in int(\Delta^{n-1})$ for the
case $0<r<1$ has some very interesting properties: It is unique, which
means that the market is completely predictable.  Moreover, if
$\overline{q}_i\geq \overline{q}_j$, then $\phi_i^*\geq \phi_j^*$,
which endows the market with a basic notion of {\it
  fairness}. Finally, the market is not a monopoly: All the market
shares are strictly positive for the equilibrium $\phi^*$.
	

Our next result characterises the ICT of the continuous dynamics. We
start with a useful lemma which indicates that submarkets can also be
modeled as RMAs.
	
\begin{lemma}
\label{induction}
Consider a T-O market defined by $n$ items and the submarket
obtained by considering only $n-1$ items.  Then  this submarket can also be
modeled by an RMA.
\end{lemma}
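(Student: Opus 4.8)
The plan is to show that the $(n-1)$-item submarket is itself a Trial-Offer market of the type analysed in Section~\ref{section:RMA}, so that the Robbins--Monro representation derived there applies almost verbatim. Since the item labels play a symmetric role once the visibilities are absorbed into the social weights, after relabelling I may assume the discarded item is item $n$; the submarket then has item set $\{1,\dots,n-1\}$, appeals $a_1,\dots,a_{n-1}$, and weights $\oq_1,\dots,\oq_{n-1}$. Equivalently, the submarket is the restriction of the original dynamics to the face $\Delta^0 := \{\phi\in\Delta^{n-1}:\phi_n=0\}$: because $f(0)=0^r=0$ for $r>0$, whenever $\phi_n=0$ we get $p_n(\phi)=0$, so item $n$ is never purchased and $\phi_n$ stays at $0$ while the surviving coordinates evolve as an $(n-1)$-item market. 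I will use the stand-alone description for the discrete process and record the face description for the associated ODE.

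Next I would write down the discrete submarket process explicitly. Let $\tilde{e}^k$ be the random unit vector in $\R^{n-1}$ indicating the next purchase, and let $\tilde{p}_i(\phi)=\oq_i f(\phi_i)/\sum_{j=1}^{n-1}\oq_j f(\phi_j)$ be the purchase probabilities furnished by Lemma~\ref{probab} applied to $n-1$ items (well defined and continuous on $\Delta^{n-2}$, since the constraint $\sum_j\phi_j=1$ keeps the denominator strictly positive). Repeating the algebra preceding \eqref{RMA-MS}, now with $\tilde{D}^k = k+\tilde{k}_0$ and $\tilde{k}_0 = \sum_{i=1}^{n-1}a_i>0$, yields
\[
\phi^{k+1} = \phi^k + \tilde{\gamma}^{k+1}\bigl(\tilde{F}(\phi^k) + \tilde{U}^{k+1}\bigr),
\qquad
\tilde{\gamma}^{k+1} = \tfrac{1}{k+\tilde{k}_0+1},\quad
\tilde{F}(\phi) = \tilde{p}(\phi)-\phi,\quad
\tilde{U}^{k+1} = \tilde{e}^k - \EE[\tilde{e}^k\mid \tilde{\mathcal{F}}^k].
\]
I would then verify the four defining properties of an RMA for the tuple $(\phi^k,\tilde{\gamma}^k,\tilde{F},\tilde{U}^k)$ exactly as in the $n$-item case: $\phi^k$ lives in $\Delta^{n-2}$, a compact, convex, hence connected subset of $\R^{n-1}$; $\tilde{\gamma}^k>0$, $\sum_k\tilde{\gamma}^k=\infty$, and $\tilde{\gamma}^k\to 0$; $\tilde{F}$ is a continuous (indeed $\mathcal{C}^1$ on $\mathrm{int}(\Delta^{n-2})$) vector field; and $\EE[\tilde{U}^{k+1}\mid\tilde{\mathcal{F}}^k]=0$ by construction of $\tilde{U}^{k+1}$ as a martingale difference. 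This proves the claim.

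Finally, for the ICT analysis that follows I would record the by-product that the ODE associated with this $(n-1)$-item RMA, namely $\dot{\phi}=\tilde{p}(\phi)-\phi$ on $\Delta^{n-2}$, is exactly the restriction of \eqref{RMC-MS} to the forward-invariant face $\Delta^0$; the same construction applies to the face obtained by deleting any one item, and iterating it covers every proper face of $\Delta^{n-1}$. I do not expect a genuine obstacle here; the only points needing a little care are (i) recomputing the normalising constant $\tilde{k}_0$ so that the step sizes $\tilde{\gamma}^k$ still satisfy the RMA conditions, and (ii) reading the martingale-difference property of $\tilde{U}^{k+1}$ off the submarket's own natural filtration rather than trying to inherit it through a conditioning argument on the original market — which is precisely why it is cleanest to treat the submarket as a fresh Trial-Offer market in its own right.
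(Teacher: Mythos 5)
You have proved a different (and much weaker) statement than the one the lemma is actually used for. In your proposal the ``submarket'' is a fresh, stand-alone T-O market on the items $\{1,\dots,n-1\}$ (equivalently, the restriction of the dynamics to the face $\{\phi_n=0\}$), and for that object the RMA representation is indeed immediate by repeating the derivation of \eqref{RMA-MS}. But the object the paper needs --- and uses in the induction step of Theorem~\ref{thm:ict}, see Equation~\eqref{equili2} --- is the process $\psi_i^t=\phi_i^t/(1-\phi_n^t)$, $i<n$, built \emph{from the same sample path} of the running $n$-item market, in which item $n$ is still present and still gets purchased with positive probability. The whole content of the lemma is the conditioning argument you explicitly decline to make: with $A=\{\text{item } n \text{ not purchased}\}$ and $B=\{\text{item } j\neq n \text{ purchased}\}$ one computes
\begin{equation*}
Pr[B\mid A]=\frac{\oq_j(\phi_j^t)^r}{\sum_{i=1}^{n-1}\oq_i(\phi_i^t)^r}
=\frac{\oq_j(\psi_j^t)^r}{\sum_{i=1}^{n-1}\oq_i(\psi_i^t)^r},
\end{equation*}
where the second equality uses the homogeneity of $f(x)=x^r$ (the factors $(1-\phi_n^t)^r$ cancel); this shows the conditional purchase law of the projected, renormalized process again has the form \eqref{probanext} with weights $\oq_1,\dots,\oq_{n-1}$, and hence that $\{\psi^t\}$ is an $(n-1)$-dimensional RMA. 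Your construction never establishes this, and it cannot be substituted into the induction: there one needs the limit of $\psi_i^t=\phi_i^t/(1-\phi_k^t)$ to say something about the limit of $\phi^t$ itself, which is only possible because $\psi^t$ is coupled to $\phi^t$, not an independent market started from scratch.

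Two further points. First, the homogeneity $f(\lambda x)=\lambda^r f(x)$ is essential and specific to $f(x)=x^r$; your fresh-market reading makes the lemma look true for any positive $f$, which should have been a warning sign that the intended statement is different. Second, your face-restriction remark ($\phi_n=0$ is forward invariant because $f(0)=0$) is correct but irrelevant here: when $\phi^0\in int(\Delta^{n-1})$, as assumed in Theorem~\ref{thm:ict}, the trajectory never sits on that face, yet the renormalized projection $\psi^t$ is still well defined and is exactly the device that lets the argument descend from $n$ to $n-1$ products.
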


\begin{proof}
  Let $\Phi^t=[\phi_1^t,\phi_2^t,\cdots,\phi_n^t]$ be the market share
  for the $n$-item T-O market at stage $t$.  Consider a new process
  $\{\Psi^t\}_{t \geq 0}$ consisting of $n-1$ products only. We show
  that this process can also be modeled as a RMA. The key is to prove
  that the probability of purchasing product $j$ in stage $t$ follows
  Equation \eqref{probanext}. Consider any item $i\in\{1,..,n\}$ such
  that $\phi_i^t\neq 1$. Without loss of generality, assume that
  $i=n$, define
\[
\psi_i^t=\frac{\phi_i^t}{1-\phi_n^t},\quad (i<n),
\]
and consider the following events:
\begin{itemize}
\item $A=\{\mbox{product } n \mbox{ is not purchased at stage } t\}$
\item $B=\{\mbox{product } j\neq n \mbox{ is purchased at stage } t\}$.
\end{itemize}

\noindent
Since $B \subseteq A$, $Pr[B\cap A]=Pr[B]=\dfrac{\overline{q}_j(\phi_j^t)^r}{\sum_{i=1}^n\overline{q}_i(\phi_i^t)^r}$. On the other hand
\[
Pr[A]=1-\dfrac{\overline{q}_n(\phi_n^t)^r}{\sum_{i=1}^n\overline{q}_i(\phi_i^t)^r}=\dfrac{\sum_{j=1}^{n-1}\overline{q}_j(\phi_j^t)^r}{\sum_{i=1}^n\overline{q}_i(\phi_i^t)^r},
\]
and therefore
\[
Pr[B| A]=\frac{Pr[B\cap A]}{Pr[A]}=\dfrac{\overline{q}_j(\phi_j^t)^r}{\sum_{i=1}^{n-1}\overline{q}_i(\phi_i^t)^r}\cdot \dfrac{(1-\phi_n^t)^r}{(1-\phi_n^t)^r}=\dfrac{\overline{q}_j(\psi_j^t)^r}{\sum_{i=1}^{n-1}\overline{q}_i(\psi_i^t)^r}.
\]
Since $\psi_i^t\geq 0$ and
$\sum_{i=1}^{n-1}\psi_i^t=\sum_{i=1}^{n-1}\frac{\phi_i^t}{1-\phi_n^t}=\frac{1}{1-\phi_n^t}\sum_{i=1}^{n-1}\phi_i^t=1$,
the $\psi_i^t$ are well-defined market shares. Since the evolution of $\psi^t$ depends on the probability $Pr[B|A]$, one can obtain a similar formula to \eqref{RMA-MS}. Indeed,  observe that on the event $A$, we have that for every $i=1,\hdots,n-1$, $\psi_i^{k+1}  =  \frac{(D^k+1) \psi_i^k}{D^k + 1} - \frac{\psi_i^k}{D^k + 1}  + \frac{\hat{e}_i^k}{D^k + 1} $, with $\EE[\hat{e}^k|\Fk]=Pr[B|A]$. Hence, 
$\{\psi^t\}_{t\geq0}$ can be modeled as \rev{an} $n-1$ dimensional RMA.
\end{proof}

\noindent
We are now in position to prove the main result of this paper. The
theorem considers the case where $\phi^0\in int(\Delta^n)$, which is
the case when the product appeals are strictly positive. It proves
that, under this condition, the ICT set of the RMA $\{\phi^t\}_{t>0}$
consists of a single equilibrium $\phi^*$.

\begin{theorem}
\label{thm:ict}
Under the social signal $f(x)=x^r, 0<r<1$ with $\phi^0\in
int(\Delta^n)$, the RMA $\{\phi^t\}_{t>0}$ converges to $\phi^*$
almost surely.
\end{theorem}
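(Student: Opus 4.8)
The plan is to combine Theorem~\ref{thm:ICT} (the ICT characterization from \cite{Benaim99}) with a careful analysis of the dynamics \eqref{RMC-MS} showing that, for $0<r<1$, the only internally chain transitive set contained in the relevant region is the single interior equilibrium $\phi^*$. By Theorem~\ref{thm:ICT}, the limit set $L\{\phi^t\}$ is almost surely an ICT set for \eqref{RMC-MS}, so it suffices to show that every such ICT set equals $\{\phi^*\}$. Since ICT sets are invariant and connected and we have already enumerated all equilibria in Theorem~\ref{eqs} (one per nonempty support $Q$), the key is to rule out every equilibrium other than $\phi^*$, and to rule out any ICT set that is not a single equilibrium (e.g.\ periodic orbits or larger invariant sets).

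First I would construct a strict Lyapunov function for \eqref{RMC-MS} on the interior of the simplex. A natural candidate is $V(\phi)=\sum_{i=1}^n \oq_i^{1/(1-r)}\,\phi_i^{1-r}$ (or a related separable convex/concave functional), and I would compute $\dot V$ along trajectories using $\dot\phi_i = p_i(\phi)-\phi_i$. The goal is to show $\dot V > 0$ except at $\phi^*$, where it vanishes; one expects this to reduce, after clearing denominators, to an inequality of the form ``$\sum \oq_i \phi_i^r \cdot \big(\text{something}\big) \geq \big(\sum \oq_i \phi_i^{r}\big)\cdot(\cdots)$'' that follows from a Cauchy--Schwarz / power-mean argument, with equality exactly at the equilibrium identified in Theorem~\ref{eqs}. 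Establishing strict monotonicity of $V$ along non-equilibrium trajectories immediately excludes periodic orbits and forces any ICT set in $int(\Delta^{n-1})$ to be a union of equilibria; since $\phi^*$ is the unique equilibrium with full support, connectedness of ICT sets then pins it down to $\{\phi^*\}$.

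Second, I must handle the boundary of the simplex, i.e.\ rule out ICT sets that touch faces where some $\phi_i=0$. Here I would use Lemma~\ref{induction}: on a face with support $Q\subsetneq N$, the induced dynamics is again a T-O market of the same type with fewer products, so by induction on $n$ its ICT structure is just its own interior equilibrium $\phi^{*,Q}$. I then need to show these boundary equilibria are not attracting for the full system — concretely, that the vector field points \emph{into} the interior near a missing coordinate. This is where the sublinearity $r<1$ is essential: if $\phi_i$ is small, then $p_i(\phi)\approx \oq_i\phi_i^r / (\text{const})$ dominates $\phi_i$ because $\phi_i^r \gg \phi_i$, so $\dot\phi_i = p_i(\phi)-\phi_i>0$ and the coordinate is pushed up. An ICT set is invariant, so it cannot contain a boundary point from which every trajectory immediately leaves toward the interior; combined with the fact that $\phi^0\in int(\Delta^n)$ keeps all $\phi_i^t>0$ along the discrete process, this excludes the boundary entirely.

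The main obstacle I anticipate is the boundary analysis: one must verify that \emph{every} lower-dimensional face is repelling (not merely that individual boundary equilibria are unstable), and the induction via Lemma~\ref{induction} must be set up so that the inductive hypothesis genuinely applies to the face dynamics — in particular checking that the time-reparametrization in Lemma~\ref{induction} does not affect limit sets and that a face equilibrium of the submarket, viewed in the full simplex, still has the escaping coordinate $\dot\phi_i>0$. A secondary technical point is proving the strict-Lyapunov inequality for $\dot V$ cleanly for all $r\in(0,1)$ simultaneously; I would isolate that as a standalone algebraic lemma (a weighted power-mean inequality) rather than grinding through it inline. Once both pieces are in place, Theorem~\ref{thm:ICT} delivers almost-sure convergence to $\phi^*$.
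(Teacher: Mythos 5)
Your overall skeleton (reduce to ICT sets via Theorem \ref{thm:ICT}, show the only ICT set reachable from an interior start is $\{\phi^*\}$, treat the faces by induction through Lemma \ref{induction}) matches the paper's, and your boundary observation --- that sublinearity makes $p_i(\phi)\approx \oq_i\phi_i^r/S$ dominate $\phi_i$ when $\phi_i$ is small, so no coordinate can vanish in the limit --- is exactly the role $r<1$ plays in the paper's argument. The genuine gap is your first and central step, the strict Lyapunov function. For \emph{any} separable candidate $V(\phi)=\sum_i w_i\phi_i^{1-r}$ (including your $w_i=\oq_i^{1/(1-r)}$), a direct computation along \eqref{RMC-MS} gives $\dot V=(1-r)\bigl[\bigl(\sum_i w_i\oq_i\bigr)/S-V\bigr]$ with $S=\sum_j\oq_j\phi_j^r$, so $\dot V$ vanishes on the entire hypersurface $\{SV=\sum_i w_i\oq_i\}$, not only at $\phi^*$. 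Concretely, for $n=2$, $r=1/2$, $\oq=(4,1)$, $w=(16,1)$, one finds $\dot V=0$ both at $\phi^*=(16/17,1/17)$ and at $\phi_1\approx 0.996$, with $\dot V$ changing sign across the latter; along the trajectory started at $\phi_1=0.999$ the function $V$ first increases and then decreases. No Cauchy--Schwarz or power-mean lemma will restore monotonicity, so the inequality you defer to a ``standalone algebraic lemma'' is false for your candidate and the interior analysis is not established.

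What the paper does instead --- and what your proposal is missing --- is an exact integration rather than a Lyapunov estimate: multiplying the pairwise identity $\oq_i^{-1}\bigl[(\phi_i^t)^{-r}\dot\phi_i^t+(\phi_i^t)^{1-r}\bigr]=\oq_j^{-1}\bigl[(\phi_j^t)^{-r}\dot\phi_j^t+(\phi_j^t)^{1-r}\bigr]$ by the integrating factor $(1-r)e^{(1-r)t}$ yields the closed-form law $(\phi_i^t)^{1-r}/\oq_i-(\phi_j^t)^{1-r}/\oq_j=e^{(r-1)t}\bigl[(\phi_i^0)^{1-r}/\oq_i-(\phi_j^0)^{1-r}/\oq_j\bigr]$, so every pairwise ``disequilibrium'' decays exponentially. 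This simultaneously supplies the strict Lyapunov function you were after (e.g.\ $\max_i\phi_i^{1-r}/\oq_i-\min_i\phi_i^{1-r}/\oq_i$) and, combined with $\sum_i\phi_i=1$ and Equation \eqref{eq:equilibrium}, rules out limits with a zero coordinate, since equality of all ratios with one coordinate zero would force all coordinates to zero. The paper then uses Lemma \ref{induction} together with a one-dimensional convergence theorem (Renlund) to establish inductively that the limits exist. If you replace your separable $V$ by this pairwise-decay identity, the remainder of your plan goes through essentially as you describe.
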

\begin{proof}
The proof studies the asymptotic \rev{behaviour} of the solutions of the following ODE:
\begin{equation}
\label{ODEE}
\frac{d \phi^t}{dt}=p(\phi^t)-\phi^t.
\end{equation}
Equation (\ref{ODEE}) is equivalent to
\[
\frac{d \phi_i^t}{dt}=\frac{\overline{q}_i(\phi_i^t)^r}{\sum_j\overline{q}_j(\phi_j^t)^r}-\phi_i^t,\quad i\in\{1,\cdots,n\};  0<t<\infty.
\]
Hence, we have
\begin{align*}
&\frac{\overline{q}_i(\phi_i^t)^r}{\sum_j\overline{q}_j(\phi_j^t)^r}=\frac{d\phi_i^t}{d t}+\phi_i^t, \\
&\frac{1}{\sum_j\overline{q}_j(\phi_j^t)^r}=\frac{1}{\overline{q}_i(\phi_i^t)^r}[\frac{d\phi_i^t}{d t}+\phi_i^t] \\
&\frac{1}{\overline{q}_i(\phi_i^t)^r}[\frac{d\phi_i^t}{d t}+\phi_i^t]=\frac{1}{\overline{q}_j(\phi_j^t)^r}[\frac{d\phi_j^t}{d t}+\phi_j^t] \quad \forall i,j\in\{1,\cdots,n\}, \\
&\overline{q}_i^{-1}(\phi_i^t)^{-r}[\frac{d\phi_i^t}{d t}+\phi_i^t]=\overline{q}_j^{-1}(\phi_j^t)^{-r}[\frac{d\phi_j^t}{d t}+\phi_j^t], \\
&\overline{q}_i^{-1}[(\phi_i^t)^{-r}\frac{d\phi_i^t}{d t}+(\phi_i^t)^{1-r}]=\overline{q}_j^{-1}[(\phi_j^t)^{-r}\frac{d\phi_j^t}{d t}+(\phi_j^t)^{1-r}], \\
&e^{(1-r)t}(1-r)\overline{q}_i^{-1}[(\phi_i^t)^{-r}\frac{d\phi_i^t}{d t}+(\phi_i^t)^{1-r}]=e^{(1-r)t}(1-r)\overline{q}_j^{-1}[(\phi_j^t)^{-r}\frac{d \phi_j^t}{d t}+ (\phi_j^t)^{1-r}], \\
&\frac{d}{dt}\left[ e^{(1-r)t}\overline{q}_i^{-1}(\phi_i^t)^{1-r}\right]=\frac{d}{dt}\left[ e^{(1-r)t}\overline{q}_j^{-1}(\phi_j^t)^{1-r}\right]
\end{align*}
\noindent where the fourth equivalence is obtained by multiplying both sides with $\mu(t)=(1-r)e^{(1-r)t}$.  Notice also that as $\phi_i^0>0$, then for any finite time $t>0$, $\phi_i^t>0$.  Taking the integral $\int_0^tdt$ of the last expression gives
\begin{equation}
e^{(1-r)t}\overline{q}_i^{-1}(\phi_i^t)^{1-r}-\overline{q}_i^{-1}(\phi_i^0)^{1-r}=e^{(1-r)t}\overline{q}_j^{-1}(\phi_j^t)^{1-r}-\overline{q}_j^{-1}(\phi_j^0)^{1-r}
\end{equation}
and hence
\begin{equation}
\label{equili}
\frac{(\phi_i^t)^{1-r}}{\overline{q}_i}-\frac{(\phi_j^t)^{1-r}}{\overline{q}_j}= e^{(r-1)t}\left[   \frac{(\phi_i^0)^{1-r}}{\overline{q}_i}-\frac{(\phi_j^0)^{1-r}}{\overline{q}_j} \right].
\end{equation}
Consider Equation \eqref{equili}:
\begin{itemize}
\item  if, for some $i\neq j,$ $\displaystyle \frac{(\phi_i^0)^{1-r}}{\overline{q}_i}=\frac{(\phi_j^0)^{1-r}}{\overline{q}_j}$, then $\dis\frac{(\phi_i^t)^{1-r}}{\overline{q}_i}=\frac{(\phi_j^t)^{1-r}}{\overline{q}_j}, $  for all  $t$;
\item if $\displaystyle
  \frac{(\phi_i^0)^{1-r}}{\overline{q}_i}\neq\frac{(\phi_j^0)^{1-r}}{\overline{q}_j}$,
  then the right-hand side of Equation \eqref{equili} goes to zero as $t\to \infty$  (because
  $r<1$) and hence the left-hand side  of \eqref{equili}  also goes to zero:
\begin{equation}
\label{limit}
\lim_{t\to\infty} \frac{(\phi_i^t)^{1-r}}{\overline{q}_i}-\frac{(\phi_j^t)^{1-r}}{\overline{q}_j}=0.
\end{equation}
\end{itemize}

\noindent
We now prove by induction that the limits for the market shares
exist. Consider first the case of 2 products. Since $\phi_2^t =
(1-\phi_1^t)$, the market is completely characterised by the value of
$\phi_1^t$ and hence we can use a one-dimensional RMA and, by Theorem
1 in \citep{Renlund2010}, the RMA converges since $F(x)=p(x)-x$ is a continuous
function and $\phi_1^t$ is bounded. Assume now that a RMA with $k-1$
products converges and consider a market with $k$ products.  By Lemma
\ref{induction}, given a $k$-dimensional RMA
$\Phi^t=[\phi_1^t,\phi_2^t,\cdots,\phi_k^t]$, we can create a $k-1$
dimensional RMA $\{\Psi^t\}_{t\geq 0}$ given by
$\psi_i^t=\frac{\phi_i^t}{1-\phi_k^t} \quad (i<k)$. By induction
hypothesis, $\psi_i=\lim_{t\to\infty}\psi_i^t$ exists for all $i<k$
and therefore Equation \eqref{equili} is equivalent to
 \begin{equation}
\label{equili2}
\frac{(\phi_k^t)^{1-r}}{\overline{q}_k(1-\phi_k^t)^{1-r}}-\frac{(\psi_i^t)^{1-r}}{\overline{q}_i}= \frac{e^{(r-1)t}}{(1-\phi_k^t)^{1-r}}\left[   \frac{(\phi_k^0)^{1-r}}{\overline{q}_k}-\frac{(\phi_i^0)^{1-r}}{\overline{q}_i} \right].
\end{equation}
Observe that, if $\lim_{t\to\infty}\phi_k^t=1$, then
$\lim_{t\to\infty}\phi_j^t=0$ for all $j\neq k$, and the market shares
converge to one of the possible equilibria (i.e., a monopoly of the
product $k$). Otherwise, the right-hand side of \eqref{equili2} goes
to 0 when $t\to\infty$ and $\dis
\lim_{t\to\infty}\frac{(\psi_i^t)^{1-r}}{\overline{q}_i} $
exists. Hence $\dis
\lim_{t\to\infty}\frac{(\phi_k^t)^{1-r}}{\overline{q}_k(1-\phi_k^t)^{1-r}}$
also exists.

Now denote by $\phi_j$ the limit of $\phi_j^t$ for all $j \in
\{1,\cdots,n\}$. Using Equation \eqref{limit}, the following equation holds for all $i,j
\in\{1,\cdots,n\}$:
\begin{equation}
\label{holds}
\frac{\phi_{i}^{1-r}}{\overline{q}_i}=\frac{\phi_{j}^{1-r}}{\overline{q}_j}.
\end{equation}
\noindent
Observe that, if there exists $l\in \{1,\cdots,n\}$ such that $\phi_l=0$,
Equation \eqref{holds} implies that $\phi_i=0$ for all $i$ which is
impossible since they add up to 1. Hence the limit process has strictly
positive components and Equation \eqref{holds} is equivalent to
\begin{align}
\label{holds1}
\phi_{i}=\frac{\phi_{j}}{\overline{q}_j^{1/(1-r)}}\overline{q}_i^{1/(1-r)}
\end{align}
which is the equation that defines $\phi^*$ in Theorem \ref{eqs} (see
Equation (\ref{eq:equilibrium})).  As a result, when $\phi^0\in
int(\Delta^{n-1})$, the only ICT set for the ODE (\ref{ODEE}) is the
equilibrium $\phi^*$ and, by Theorem \ref{thm:ICT}, the RMA converges
almost surely to $\phi^*$.
\end{proof}

	

\noindent
Consider now the case $r>1$ for which Theorem \ref{eqs} still
characterises the equilibria. In this case, the dynamic \rev{behaviour} is
completely different due to the strength of the social signal. 
It is however possible to prove
that the ICT set of the RMA $\{\phi^t\}_{t>0}$ consists only of
equilibria.

\begin{theorem}
  Consider the social signal $f(x)=x^r$ with $r>1$. The RMA
  $\{\phi^t\}_{t \geq 0}$ converges almost surely to one of the
  equilibria $\phi\in Z_F:=\{x\in \Delta^{n-1}: p(x)-x=0\}$.
\end{theorem}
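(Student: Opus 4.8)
The plan is to invoke Theorem~\ref{thm:ICT}: since the hypotheses H1--H3 hold for the RMA $\{\phi^t\}_{t\geq 0}$ and $F(x)=p(x)-x$ is bounded and $\C^1$ on the relevant domain, the limit set $L\{\phi^t\}_{t\geq 0}$ is almost surely an ICT set for the ODE \eqref{RMC-MS}. It therefore suffices to show that the only ICT sets of \eqref{RMC-MS} when $r>1$ are the equilibria in $Z_F$. Because an ICT set is connected and $Z_F$ is finite (it consists of the finitely many points described in Theorem~\ref{eqs}, one for each nonempty $Q\subseteq N$), showing that every ICT set is contained in $Z_F$ forces it to be a single point, which gives almost sure convergence to a single equilibrium.

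The core step is thus to rule out ICT sets that contain non-equilibrium points. I would exhibit a strict Lyapunov function for \eqref{RMC-MS}, i.e.\ a $\C^1$ function $V:\Delta^{n-1}\to\R$ that is strictly decreasing (or increasing) along every non-constant solution trajectory, so that $\dot V(\phi)<0$ whenever $p(\phi)\neq\phi$ and $\dot V(\phi)=0$ iff $\phi\in Z_F$. The natural candidate, mirroring the computation in the proof of Theorem~\ref{thm:ict}, is built from the quantities $\frac{(\phi_i)^{1-r}}{\oq_i}$: since for $r>1$ the exponent $1-r$ is negative, one expects the \emph{maximum} of $\frac{(\phi_i)^{1-r}}{\oq_i}$ over active coordinates to play the role of an energy that the dynamics wants to shrink, or equivalently to use a potential such as $V(\phi)=\sum_i \oq_i^{1/(r-1)}$-weighted terms, or $V(\phi)=\tfrac{1}{2}\sum_i \oq_i^{-1}(\phi_i)^{2-r}$ up to the right sign and normalization. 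Any $\C^1$ Lyapunov function on the simplex whose set of critical points is exactly $Z_F$ will do; one then appeals to the standard fact (see \cite{Benaim99}, and the ODE-method literature cited) that if $V$ is a strict Lyapunov function for \eqref{RMC-MS} then every ICT set is contained in the set $\{\phi:\dot V(\phi)=0\}=Z_F$, and hence every ICT set is a single equilibrium.

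The main obstacle is identifying the correct Lyapunov function and verifying the sign of $\dot V$ globally on the simplex, including on the boundary faces where some $\phi_i=0$ and where the exponents $(\phi_i)^{1-r}$ blow up for $r>1$ (so $V$ itself may be infinite there). To handle this I would argue face-by-face: on each open face $\{\phi:\phi_i>0\text{ for }i\in Q,\ \phi_i=0\text{ otherwise}\}$ the dynamics restricts to an invariant sub-simplex that is itself the continuous dynamics of a $|Q|$-product T-O market, so by Lemma~\ref{induction} and induction on $n$ I can assume the statement holds there; what remains is to show that trajectories starting in the interior cannot have $\omega$-limit points on lower faces unless they converge to an equilibrium on that face. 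Combined with the Lyapunov argument in the interior, this covers all cases. A cleaner alternative, which I would present if the Lyapunov verification is messy, is to use Equation~\eqref{equili} directly: for $r>1$ the factor $e^{(r-1)t}$ now \emph{grows}, which forces, for any trajectory bounded away from the boundary, the bracketed initial discrepancy $\frac{(\phi_i^0)^{1-r}}{\oq_i}-\frac{(\phi_j^0)^{1-r}}{\oq_j}$ to vanish for all $i,j$ — i.e.\ interior trajectories that do not escape to a face must already start at $\phi^*$ — so every non-equilibrium interior point lies on a trajectory that converges to the boundary; pushing this through the induction on faces again yields that all ICT sets reduce to points of $Z_F$.
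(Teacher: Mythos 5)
Your ``cleaner alternative'' at the end is in fact the paper's actual proof: the authors apply Theorem~\ref{thm:ICT} and then read off the consequences of Equation~\eqref{equili} for $r>1$. Concretely, they sort the products by the initial values $H_{i,0}=(\phi_i^0)^{1-r}/\oq_i$; whenever two consecutive values differ, the right-hand side of \eqref{equili} diverges like $e^{(r-1)t}$, and since $(\phi_i^t)^{1-r}\ge 1$ is bounded below on the simplex, the divergence can only be absorbed by $(\phi_{h^{-1}(i)}^t)^{1-r}\to\infty$, i.e.\ $\phi_{h^{-1}(i)}^t\to 0$; if all the $H_{i,0}$ coincide the trajectory sits at $\phi^*$. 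The surviving coordinates then satisfy the equal-$H$ relations, which pins the limit to the corresponding boundary equilibrium of Theorem~\ref{eqs}. This is exactly your second route (including the face-by-face reduction via Lemma~\ref{induction}), and your observation that ICT sets are connected while $Z_F$ is finite is the right way to conclude single-point convergence. So on this route your proposal is essentially the paper's argument.

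Your \emph{primary} route, however, has a genuine gap: you never exhibit a Lyapunov function, and there is a concrete reason to doubt that the candidates you name will work. As the paper itself points out when discussing \cite{Ceyhan11}, the Lyapunov/gradient-flow technique there hinges on the Jacobian of the choice probabilities being symmetric, which fails for T-O markets with position bias ($v_j\neq v_k$); so the dynamics \eqref{RMC-MS} is not a gradient system for any obvious potential, and ``any $\C^1$ function whose critical set is $Z_F$'' will certainly not do --- one must actually verify $\dot V<0$ off $Z_F$, which is precisely the step you leave open. You also correctly flag the blow-up of $(\phi_i)^{1-r}$ on the boundary, but offer no resolution within the Lyapunov framework. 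Since the explicit integration leading to \eqref{equili} sidesteps all of this, I would drop the Lyapunov plan entirely and promote your alternative to the main argument.
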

	
\begin{proof}
  The analysis of the ODE is the same as in Theorem \ref{thm:ict}
  since the only restriction in the proof is $r\neq1$. However, the
  interpretation of Equation (\ref{equili}) changes when $r>1$.

  \noindent We define $H_{i,0}:=\dfrac{(\phi_{i}^0)^{1-r}}{\overline{q}_i}$ for all $1\leq i\leq n$, and
  order the products in decreasing order of $H_{i,0}$. Let
  $h:\{1,..,n\}\to \{1,..,n\}$ be the permutation that defines this
  order and denotes by $h^{-1}$ its inverse function, i.e.,
  $h^{-1}(i)=j$ means that product $j$ is in the $i$-th position in
  permutation $h$.  We have that $H_{h^{-1}(1),0}\geq \cdots\geq
  H_{h^{-1}(n),0}$. Define the following sets:
\begin{itemize}
\item $\dis Q_0=\{ i \in \{1,..,n-1\}: H_{h^{-1}(i),0}=H_{h^{-1}(i+1),0}\},$
\item $\dis Q_1=\{i\in \{1,..,n-1\}: H_{h^{-1}(i),0}>H_{h^{-1}(i+1),0}\},$
\end{itemize}

\noindent
and consider the following case analysis:

\begin{itemize}
\item[i)] If $|Q_0|=n-1$, then $H_{h^{-1}(i),0}=H_{h^{-1}(i+1),0}$ for
  all $1\leq i\leq n-1$. By Equation \eqref{equili},
  $\dis\frac{(\phi_{h^{-1}(i)}^t)^{1-r}}{\overline{q}_{h^{-1}(i)}}=\frac{(\phi_{h^{-1}(i+1)}^t)^{1-r}}{\overline{q}_{h^{-1}(i+1)}},
  $ for all $t>0$ and for all $1\leq i\leq n-1$, which leads again to the inner
  equilibrium $\phi^*$.

\item[ii)] If $0<|Q_0|<n-1$, select $i\notin Q_0$. Equation
  \eqref{equili} implies that
\[\lim_{t\to\infty} \frac{(\phi_{h^{-1}(i)}^t)^{1-r}}{\overline{q}_{h^{-1}(i)}}-\frac{(\phi_{h^{-1}(i+1)}^t)^{1-r}}{\overline{q}_{h^{-1}(i+1)}}=\infty, \]
because $r>1$ and hence $e^{(r-1)t}\to \infty$ when $t\to \infty$. It
follows that $\dis \lim_{t\to\infty} \phi_{h^{-1}(i)}^t=0$ and the RMA
necessarily converges to one of the equilibria that live in the
boundary of the simplex, but they are not monopolies (see Theorem \ref{eqs}).

\item[iii)] If $|Q_0|=0$ then $|Q_1|=n-1$, Using a similar reasoning
  as in case ii), it follows that $\dis \lim_{t\to\infty}
  \phi_{h^{-1}(i)}^t=0$ for all $1\leq i\leq n-1$ and, since
  $\phi^t\in\Delta^{n-1}$ for all $t$, $\dis \lim_{t\to\infty}
  \phi_{h^{-1}(n)}^t=1$.
\end{itemize}

\noindent
As a result, the only ICT for the differential equation \eqref{ODEE}
are equilibria and, by Theorem \ref{thm:ICT}, the RMA
$\{\phi^t\}_{t\geq0}$ converges almost surely to one of them.
\end{proof}

\noindent
It is important to observe that, in the case $r>1$, the initial
conditions, i.e., the initial appeals and how the market evolves early
on, affect the entire dynamics. This is in contrast with the case
$r<1$ for which the long-term \rev{behaviour} only depends of the product
qualities. This has fundamental consequences for the predictability
and efficiency of the market. We will show that, when $r>1$, the inner
equilibrium $\phi^*$ is always unstable. The result will follow as
corollary of the following theorem.

\begin{theorem}
\label{theorem:trace}
Consider  the equilibria given by \[
\phio_i=\dfrac{\overline{q}_i^{\frac{1}{1-r}}}{\sum_{j\in Q}
  \overline{q}_j^{\frac{1}{1-r}}} \; \mbox{ if } i\in Q \; \mbox{ and } \; \phio_i=0 \; \mbox{ if } i\in N\setminus Q
\]
\rev{with $Q=\{i\in N: \phi_i\neq 0\}$}. The trace of the Jacobian matrix, $tr(JF(\phio))$, is given by
\[
tr(JF(\phio)) = 2r[|Q|-1] - n.
\]
\end{theorem}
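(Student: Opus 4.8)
The plan is to compute $tr(JF(\phio))$ directly from the definition of the trace as the sum of the diagonal entries of the Jacobian of $F(\phi)=p(\phi)-\phi$. Because $F_i(\phi)=p_i(\phi)-\phi_i$, we have $\frac{\partial F_i}{\partial\phi_i}=\frac{\partial p_i}{\partial\phi_i}-1$, so
\[
tr(JF(\phio))=\sum_{i=1}^{n}\frac{\partial p_i}{\partial\phi_i}(\phio)-n ,
\]
and the whole statement reduces to a closed-form evaluation of the diagonal of the Jacobian of the generalized-logit map $p$ at the equilibrium $\phio$; everything after that is a summation.

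For the core computation I would differentiate $p_i(\phi)=\overline{q}_i\phi_i^{\,r}\big/\sum_{j}\overline{q}_j\phi_j^{\,r}$ with respect to $\phi_i$. The helpful structural fact is that, among the summands of the denominator, only the $i$-th one depends on $\phi_i$, and its derivative $r\,\overline{q}_i\phi_i^{\,r-1}$ is exactly the derivative of the numerator, so the quotient rule collapses to a compact closed form for $\frac{\partial p_i}{\partial\phi_i}$ (logarithmic differentiation is the quickest route to it). I would then specialise to $\phi=\phio$ with $i\in Q$, using two facts already available: $\phio$ is an equilibrium, so $p_i(\phio)=\phio_i$; and $\phio_i=\overline{q}_i^{1/(1-r)}\big/\sum_{j\in Q}\overline{q}_j^{1/(1-r)}$. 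Substituting these and simplifying the exponents makes the $\overline{q}$'s and the normalising sum cancel, reducing $\frac{\partial p_i}{\partial\phi_i}(\phio)$ to a simple expression in $r$ and $\phio_i$. Summing this over $i\in Q$ and using $\sum_{i\in Q}\phio_i=1$ telescopes the $Q$-part of the trace into a multiple of $|Q|-1$; combined with the $n$ copies of $-1$ from the identity part of $F$ this gives the stated value $2r(|Q|-1)-n$.

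There remains the case of indices $i\in N\setminus Q$, where $\phio_i=0$: one must check that these do not contribute to the trace. Here I would note that near $\phio$ the denominator $\sum_j\overline{q}_j\phi_j^{\,r}$ is determined by the strictly positive coordinates and hence bounded away from $0$, whence $\frac{\partial p_i}{\partial\phi_i}$ is controlled by the powers $\phi_i^{\,r-1}$ and $\phi_i^{\,2r-1}$ as $\phi_i\to0$; these vanish exactly in the regime $r>1$ in which such boundary equilibria arise as limit sets, so those coordinates contribute nothing (and when $0<r<1$ the relevant equilibrium is the inner one, $Q=N$, so the question does not arise).

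The part I expect to take the most care is not conceptual but the exponent bookkeeping: substituting $\phio_i=\overline{q}_i^{1/(1-r)}/\sum_j\overline{q}_j^{1/(1-r)}$ into the collapsed expression for $\frac{\partial p_i}{\partial\phi_i}$ and verifying the cancellations — this is precisely the computation that pins down the coefficient of $(|Q|-1)$ in the final answer. A close second is the treatment of the boundary equilibria, where one must confirm the Jacobian is even well-defined (which forces $r>1$ on the proper faces of $\Delta^{n-1}$) and that the vanishing coordinates genuinely drop out. Once the single diagonal-entry identity is established, the remaining steps — the quotient-rule collapse, the substitution $p_i(\phio)=\phio_i$, and the telescoping via $\sum_{i\in Q}\phio_i=1$ — are all routine.
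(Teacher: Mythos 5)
There is a genuine gap, and it sits exactly where you declare the computation ``collapses'': in your claim that, among the summands of the denominator $\sum_k \oq_k \phi_k^r$, only the $i$-th depends on $\phi_i$. The paper does \emph{not} compute the unconstrained partial derivative of $p_i$. It differentiates subject to the simplex constraint $\sum_j\phi_j=1$, taking $\partial\phi_k/\partial\phi_i=-1$ for every $k\neq i$, so that \emph{every} summand of the denominator varies with $\phi_i$. This changes the answer. With your convention, the quotient rule gives
\[
\frac{\partial p_i}{\partial\phi_i}(\phio)=\frac{r\oq_i\phio_i^{\,r-1}}{S}\bigl(1-p_i(\phio)\bigr)=r\,(1-\phio_i)
\qquad\text{for } i\in Q,
\]
using $S:=\sum_k\oq_k\phio_k^{\,r}$ and the equilibrium identity $\oq_i\phio_i^{\,r-1}=S$; summing and subtracting $n$ yields $r\,(|Q|-1)-n$, \emph{not} the stated $2r\,(|Q|-1)-n$. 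With the paper's constrained derivative one picks up the extra term
\[
\frac{\phio_i}{S}\sum_{k\neq i}\oq_k f'(\phio_k)=r\,(|Q|-1)\,\phio_i ,
\]
per diagonal entry, which sums over $i\in Q$ to a second copy of $r\,(|Q|-1)$ and produces the factor of $2$. So the route you describe proves a different identity from the one in the theorem; you assert the target value $2r(|Q|-1)-n$ at the end, but the ``telescoping via $\sum_{i\in Q}\phio_i=1$'' you invoke only ever yields one multiple of $|Q|-1$ under your differentiation convention. To close the gap you must either adopt the paper's convention for $JF$ (differentiate along the simplex, with $\partial\phio_k/\partial\phio_i=-1$ for $k\neq i$) or otherwise account for the cross-dependence of the coordinates.

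Your handling of the boundary indices $i\in N\setminus Q$ is consistent with the paper: both arguments give a diagonal entry of $-1$ there (the terms carrying $\phio_i^{\,r-1}$ and $\phio_i$ vanish when $\phio_i=0$ and $r>1$), and your remark that such equilibria with $Q\subsetneq N$ are only relevant in the regime $r>1$ is a fair, if informal, justification for why the entries are well defined. But this does not rescue the main computation, whose coefficient on $|Q|-1$ is wrong by a factor of two under the approach as written.
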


\begin{proof}
Consider the trace of the Jacobian at $\phio$, i.e.,
\[
tr(JF(\phio))=\sum_{i=1}^n\dfrac{\partial F_i(\phio)}{\partial \phi_{i}}
\]
Observe that, for $k\neq i$, $\frac{\partial \phio_k}{\partial \phio_i}=-1$, since $\sum_j \phio_j=1$ and thus $\phio_k=1-\sum_{j\neq k}\phio_j$. We have
	 \begin{align*}
	 \dfrac{\partial F_i(\phio)}{\partial \phi_{i}}&= \frac{\partial}{\partial \phi_i}\left[ \frac{\oq_if(\phi_i)}{\sum_k \oq_kf(\phi_k)}-\phi_i\right](\phio)\\
	 &= \frac{\oq_if'(\phio_i)}{\sum_k \oq_kf(\phio_k)}- \underbrace{\frac{\oq_if(\phio_i)}{\sum_k \oq_kf(\phio_k)}}_{p_i(\phio)}\frac{\oq_if'(\phio_i)-\sum_{k\neq i}\oq_kf'(\phio_k)}{\sum_k \oq_kf(\phio_k)}-1\\
	 &= \frac{1}{\sum_k \oq_kf(\phio_k)}\left[ \oq_if'(\phio_i)+\phio_i\left(-\oq_if'(\phio_i)+\sum_{k\neq i}\oq_kf'(\phio_k)\right)\right]-1\\
	 &=\frac{1}{\sum_k \oq_kf(\phio_k)}\left[  (1-\phio_i)\oq_if'(\phio_i)+\phio_i\left(\sum_{k\neq i}\oq_kf'(\phio_k)\right)\right]-1
	 \end{align*}

 \noindent where we used that $p_i(\phio)=\phio_i$ (since $\phio$ is
 an equilibrium) to move from the second to the third equality. Now,
 when $f(x)=x^r$ for $r>1$, $f'(x)=rx^{r-1}$ and we have
\begin{equation}
\label{deriv}	 \dfrac{\partial F_i(\phio)}{\partial \phi_{i}}=\frac{1}{\sum_k \oq_k(\phio_k)^r}\left[  (1-\phio_i)\oq_ir(\phio_i)^{r-1}+\phio_i\left(\sum_{k\neq i}\oq_kr(\phio_k)^{r-1}\right)\right]-1.
\end{equation}
	
\noindent If $i\in N\setminus Q$, then $\phio_i=0$ and $\dfrac{\partial F_i(\phio)}{\partial
  \phi_{i}}=-1$. If $i\in Q$, it follows that
 \begin{align*}
  \dfrac{\partial F_i(\phio)}{\partial \phi_{i}} &=\frac{r}{\sum_{k\in Q}\oq_k\left(\frac{\oq_k^{\frac{1}{1-r}}}{\sum_{j\in Q }\oq_j^{\frac{1}{1-r}}}\right)^r}  \left[(1-\phio_i)\oq_i\left(\frac{\oq_i^{\frac{1}{1-r}}}{\sum_{j\in Q}\oq_j^{\frac{1}{1-r}}}\right)^{r-1}+\phio_i\left(\sum_{k\in Q\setminus\{i\}}\oq_k\left(\frac{\oq_k^{\frac{1}{1-r}}}{\sum_{j\in Q}\oq_j^{\frac{1}{1-r}}}\right)^{r-1}\right)\right]-1\\
	&=\frac{r}{\sum_{k\in Q}\oq_k\frac{\oq_k^{\frac{r}{1-r}}}{\sum_{j\in Q}\oq_j^{\frac{1}{1-r}}}}  \left[(1-\phio_i)\oq_i\oq_i^{-1}+\phio_i\left(\sum_{k\in Q\setminus\{i\}}\oq_k\oq_k^{-1}\right)\right]-1.\\
	 \end{align*}
Since $\sum_{k\in Q}\oq_k\frac{\oq_k^{\frac{r}{1-r}}}{\sum_{j\in Q}\oq_j^{\frac{1}{1-r}}}=1$, we have
$
\dfrac{\partial F_i(\phio)}{\partial \phi_{i}} =r\left[1-\phio_i+\phio_i(|Q|-1) \right]-1=r[1+(|Q|-2)\phio_i]-1.
$
As a result, the trace of the Jacobian at $\phio$ is given by
\begin{align*}
tr(JF(\phio))=\sum_{i=1}^n\dfrac{\partial F_i(\phio)}{\partial \phi_{i}}&=\sum_{i\in Q}(r[1+(|Q|-2)\phio_i]-1)+\sum_{i\in N\setminus Q}(-1)\\
 &=r[|Q|+(|Q|-2)\sum_{i\in Q}\phio_i]-|Q|-(|N|-|Q|)\\
 &=2r[|Q|-1]-n.
 \end{align*}
\end{proof}

\begin{corollary}
Under a social signal $f(x)=x^r, r>1$, the inner equilibrium $\phi^*$ is unstable.
\end{corollary}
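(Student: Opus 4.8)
The plan is to read off the trace of the Jacobian at $\phi^*$ from Theorem~\ref{theorem:trace} and then invoke the standard linearisation criterion. Since $\phi^*$ is, by definition, the equilibrium of Theorem~\ref{eqs} all of whose coordinates are strictly positive, it corresponds to the case $Q=N$, i.e.\ $|Q|=n$. Substituting $|Q|=n$ into Theorem~\ref{theorem:trace} gives $tr(JF(\phi^*)) = 2r[n-1]-n$.

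The first step is to check that this quantity is strictly positive whenever $r>1$ (and $n\geq 2$, the case $n=1$ being vacuous since $\Delta^{0}$ is a single point). Indeed, $2r[n-1]-n > 2[n-1]-n = n-2 \geq 0$ for $n\geq 3$, while for $n=2$ one has $tr(JF(\phi^*)) = 2(r-1) > 0$ directly; hence $tr(JF(\phi^*)) > 0$ in every case.

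The second step is to translate this into a statement about eigenvalues. By Theorem~\ref{trace}, $tr(JF(\phi^*)) = \sum_{i=1}^{n}\lambda_i$, where $\lambda_1,\dots,\lambda_n$ are the eigenvalues of $JF(\phi^*)$; since $JF(\phi^*)$ is a real matrix, its non-real eigenvalues occur in conjugate pairs, so $\sum_i\lambda_i = \sum_i \mathrm{Re}(\lambda_i)$ is a real number equal to the trace. If all eigenvalues had non-positive real part, this sum would be $\leq 0$, contradicting the positivity of the trace; hence $JF(\phi^*)$ has at least one eigenvalue with strictly positive real part. Applying Theorem~\ref{ODEunst} then shows that $\phi^*$ is an unstable equilibrium of the ODE~\eqref{ODEE}, which is exactly the assertion of the corollary.

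I do not anticipate any genuine obstacle: the corollary is an almost immediate consequence of Theorem~\ref{theorem:trace}. The only mild points requiring care are verifying positivity of the trace across the full parameter range $r>1$, $n\geq 2$ (the binding instance being $n=2$), and the elementary observation that a real matrix with strictly positive trace must possess an eigenvalue of positive real part.
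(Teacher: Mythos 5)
Your proposal is correct and follows essentially the same route as the paper: substitute $|Q|=n$ into Theorem~\ref{theorem:trace} to get $tr(JF(\phi^*))=2r[n-1]-n>0$ for $r>1$, $n\geq2$, then use Theorem~\ref{trace} to deduce an eigenvalue with positive real part and Theorem~\ref{ODEunst} to conclude instability. Your positivity check and the explicit remark about complex-conjugate eigenvalue pairs are just slightly more detailed versions of the paper's argument.
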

\begin{proof}
By Theorem \eqref{theorem:trace}, we have that $tr(JF(\phi^*)) =
2r[|Q|-1]-n$. Since $\phi^*$ has $n$ non-zero market shares, it
follows that $tr(JF(\phi^*)) = 2r[n-1]-n = (r-1)n + r(n-2) > 0$, since
$r > 1$ and $n\geq 2$. As a result, there exists an eigenvalue
$\lambda=\lambda(\phio)$ satisfying $Re(\lambda)>0$.  By Theorems
\eqref{ODEunst} and \eqref{trace}, $\phi^*$ is unstable.
\end{proof}

\begin{remark}
Theorem \eqref{theorem:trace} can also be used to show that many other
equilibria are unstable: They simply need to have enough non-zero
market shares to satisfy $2r[|Q|-1]>n$. Moreover, the theorem can also
be used to show that, for any equilibrium $\phi$ that is not a
monopoly, there exists $r>1$ that makes $\phi$ unstable. It suffices
to choose $r>\frac{n}{2(|Q|-1)}$. For instance, for $n=4$, all the
equilibria but the monopolies are unstable as soon as $r>2$.
\end{remark}

%

\section{Agent-Based Simulation Results}
\label{section:simulation}

We now report results from an agent-based simulation to highlight and
complement the theoretical analysis. The agent-based simulation uses
the setting from \citep{Rank}, which used a dataset to emulate an
environment similar to the \musiclab{}. The setting consists of 50
songs with the values of qualities and appeals specified in Appendix
\ref{dataset}. As mentioned in the introduction, the \musiclab{} is a
trial-offer market where participants can try a song and then decide
to download it. The generative model of the \musiclab{}
\citep{krumme2012quantifying} uses the consumer choice preferences
described in Section \ref{section:market}.

From this section and onwards, we assume that, it each period, a new
customer arrives and may or may not buy a product based on the
probability (quality) of the product tried. (Note that, in the earlier
sections, each new period began when a product was purchased). The reason for this change is our interest in quantifying
the expected number of purchases per period, and how it changes
depending on different ranking policies. We use the expected number of
purchases per period as way to measure the \emph{market
  efficiency}. This view obviously does not change any result from the
previous sections.

\paragraph{The Simulation Setting}

\begin{figure}[!t]
\begin{centering}
\includegraphics[width=0.5\linewidth]{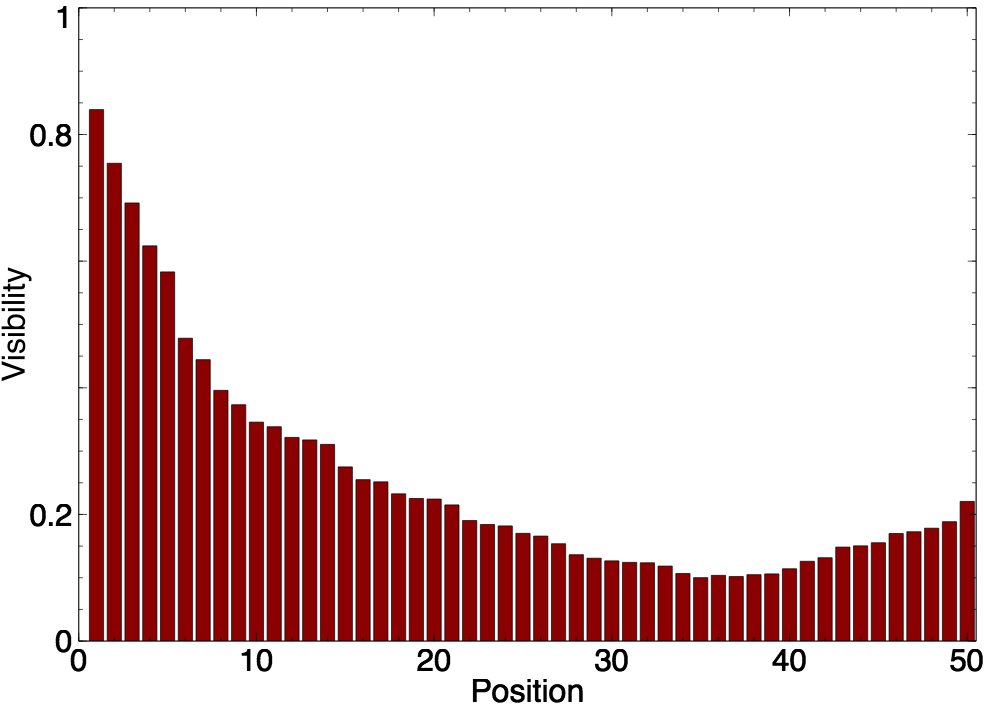}
\end{centering}
\centering{}
\caption{\small The visibility $v_p$ (y-axis) of position $p$ in the
  song list (x-axis) where $p=1$ is the top position and $p=50$ is
  the bottom position of the list which is displayed in a single column.}
\label{fig:visibility}
\end{figure}

\begin{figure}[!t]
\begin{centering}
\includegraphics[width=0.5\linewidth]{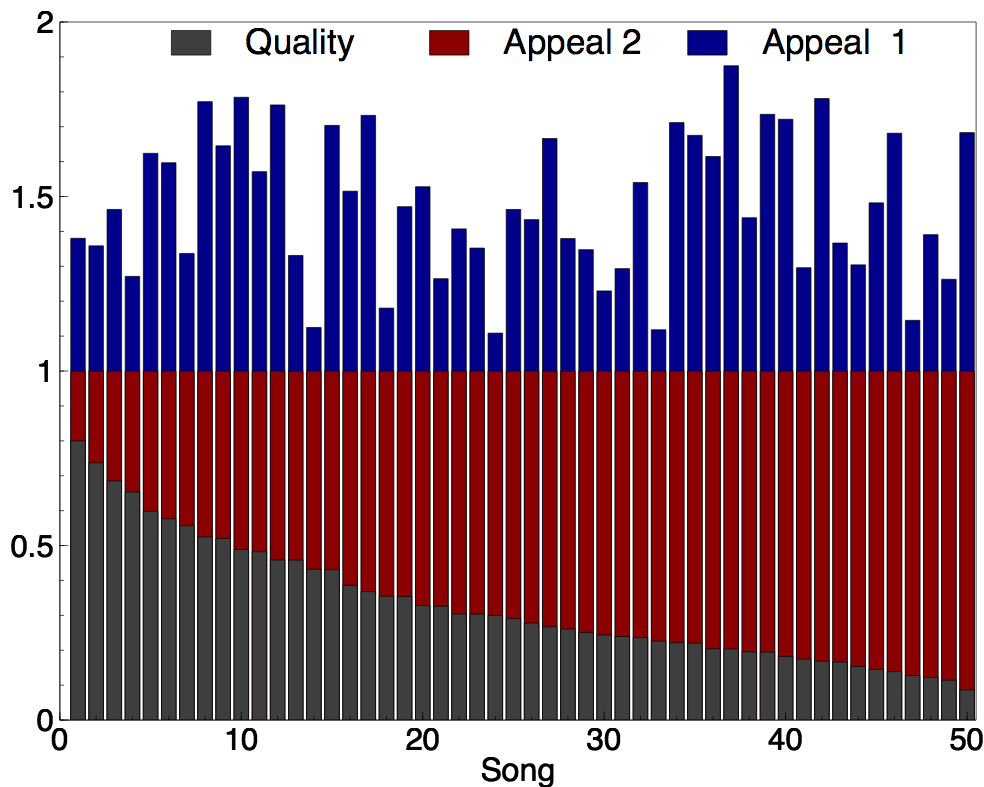}
\end{centering}
\centering{}
\caption{\small The \rev{quality} $q_i$ (grey) and \rev{appeal} $a_i$ (red and blue) of
  song $i$ in the two settings. The settings only differ in the appeal
  of songs, and not in the quality of songs. In the first setting, the
  quality and the appeal for the songs were chosen independently
  according to a Gaussian distribution normalised to fit between 0 and
  1. The second setting explores an extreme case where the appeal is
  anti-correlated with the quality used in setting 1. In this
  second setting, the appeal and quality of each song sum to 1.}
\label{fig:qualityAppeal}
\end{figure}

The agent-based simulation aims at emulating the \musiclab{}: Each
simulation consists of $L$ iterations ($L$ simulated users ) and, at each iteration $t: 0<t<L$,

\begin{enumerate}
\item the simulator randomly selects a song $i$ according to the
  probabilities $P_i(\sigma,\phi)$, where $\sigma$ is the ranking
  proposed by the policy under evaluation and $\phi$ represents the
  market shares;

\item the simulator randomly determines, with probability $q_i$,
  whether selected song $i$ is downloaded. In the case of a download,
  the simulator increases the number of downloads of song $i$, i.e.,
  $d_i^{t+1} = d_i^t + 1$, changing the market shares. Otherwise,
  $d_i^{t+1} = d_i^t$.
\end{enumerate}

\noindent
Every $t>0$ iterations, a new list $\sigma$ may be recomputed if the
ranking policy is dynamic (e.g., the popularity ranking).  In this
paper, the simulation setting focuses mostly on two policies for
ranking the songs:
\begin{itemize}
\item The {\em quality ranking} (Q-rank) that assigns the songs in decreasing order
  of quality to the positions in decreasing order of visibility (i.e.,
  the highest quality song is assigned to the position with the highest visibility
  and so on);

\item The {\em popularity ranking} (D-rank) that assigns the songs in
  decreasing order of popularity (i.e., $d_i^t$) to the positions in
  decreasing order of visibility (i.e., the most popular song is
  assigned to the position with the highest visibility and so on);
\end{itemize}

\noindent
Note that the popularity ranking was used in the original \musiclab{},
while the quality ranking is a static policy: the ranking remains the
same for the entire simulation. The simulation setting, which aims at
being close to the \musiclab{} experiments, considers 50 songs and
simulations with L=$10^5$ iterations unless stated otherwise. The songs are
displayed in a single column. The analysis in
\citep{krumme2012quantifying} indicated that participants are more
likely to try songs higher in the list. More precisely, the visibility
decreases with the list position, except for a slight increase at the
bottom positions. Figure \ref{fig:visibility} depicts the visibility
profile based on these guidelines, which is used in all computational
experiments. The paper also uses two settings for the quality and
appeal of each song, which are depicted in Figure
\ref{fig:qualityAppeal}. In the first setting, the quality and the
appeal were chosen independently according to a Gaussian distribution
normalised to fit between $0$ and $1$. The second setting explores an
extreme case where the appeal is anti-correlated with quality: The
quality is the same as in the first setting but the appeal is chosen
such that the sum of appeal and quality is 1. 

\subsection{Convergence}

We first illustrate the convergence of the market for various
popularity signals ($r < 1$) using the quality ranking. In order to
visualise the results, we focus on only 5 songs, where the qualities,
appeals, and visibilities are given by
\begin{tabbing}
q = [ 0.80, 0.72, 0.68, 0.65, 0.60 ] \\
a = [ 0.38, 0.35, 0.46, 0.27, 0.62 ] \\
v = [ 0.80, 0.75, 0.69, 0.62, 0.58 ].
\end{tabbing}

\noindent
The simulation is run for $10^5$ iterations for the social signals
$f(x)=x^r (r\in \{0.1, 0.25, 0.5,0.75\})$ and Figure
\ref{fig:sublinfeed} depicts the simulation results. Observe that the
equilibrium $\phi^*$ (dashed lines) changes because it depends of the
value of $r$. Interestingly, for social signals with $r\leq 0.5$, the
convergence of the process seems to occur around $10^4$ time steps (iterations)
even when they start with a strong initial distortion due to the
appeals of the songs. The simulations show clear differences in
\rev{behaviour} depending on $r$ and, when $r$ moves closer to 1, the market
tends to exhibit a monopolistic \rev{behaviour} for the song with the best
quality (confirming the results obtained in \citep{ICWSM16SI}).

\begin{figure}[h]
\vspace{-5mm}
\begin{centering}
\includegraphics[scale=0.37]{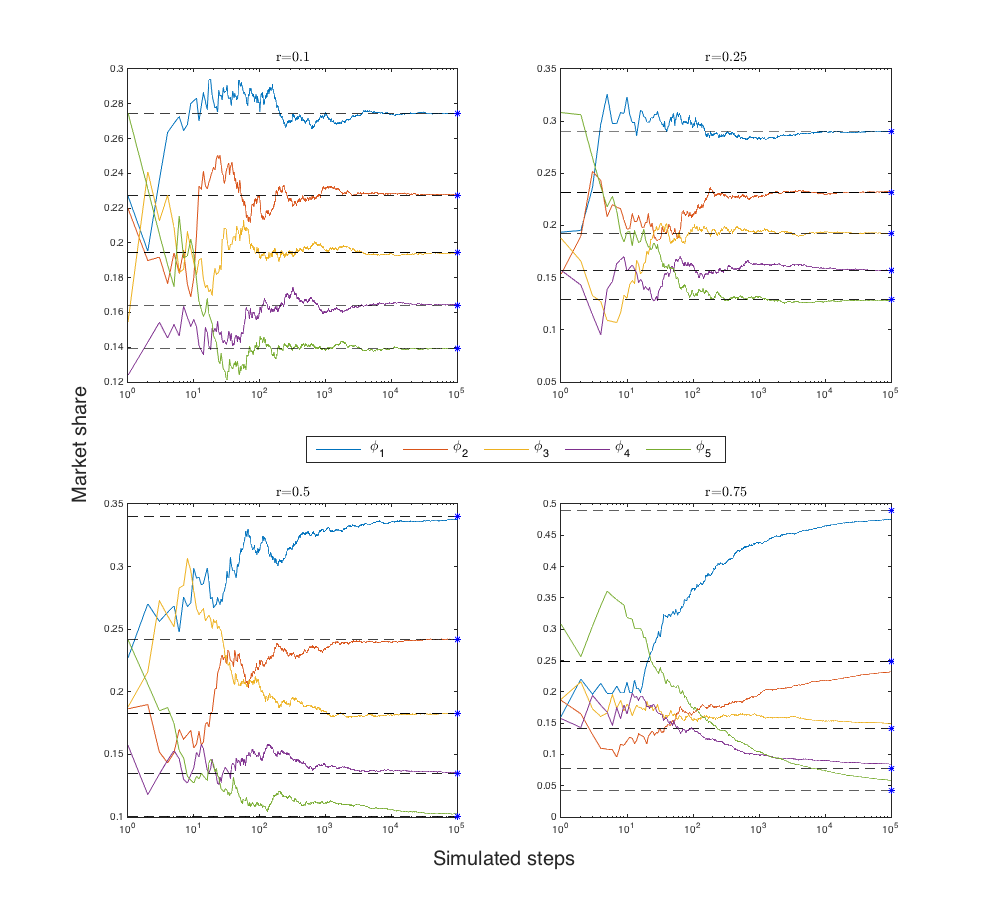}
\end{centering}
\centering{}
\vspace{-2mm}
\caption{\small Evolution of market shares of 5 songs using a social signal $f(x)=x^r,\quad r\in \{0.1, 0.25, 0.5,0.75\}$. Dashed lines are the values of the equilibrium for each song.}
\label{fig:sublinfeed}
\end{figure}

\begin{figure}[!t]
\vspace{-3.5mm}
\begin{centering}
\includegraphics[scale=0.31]{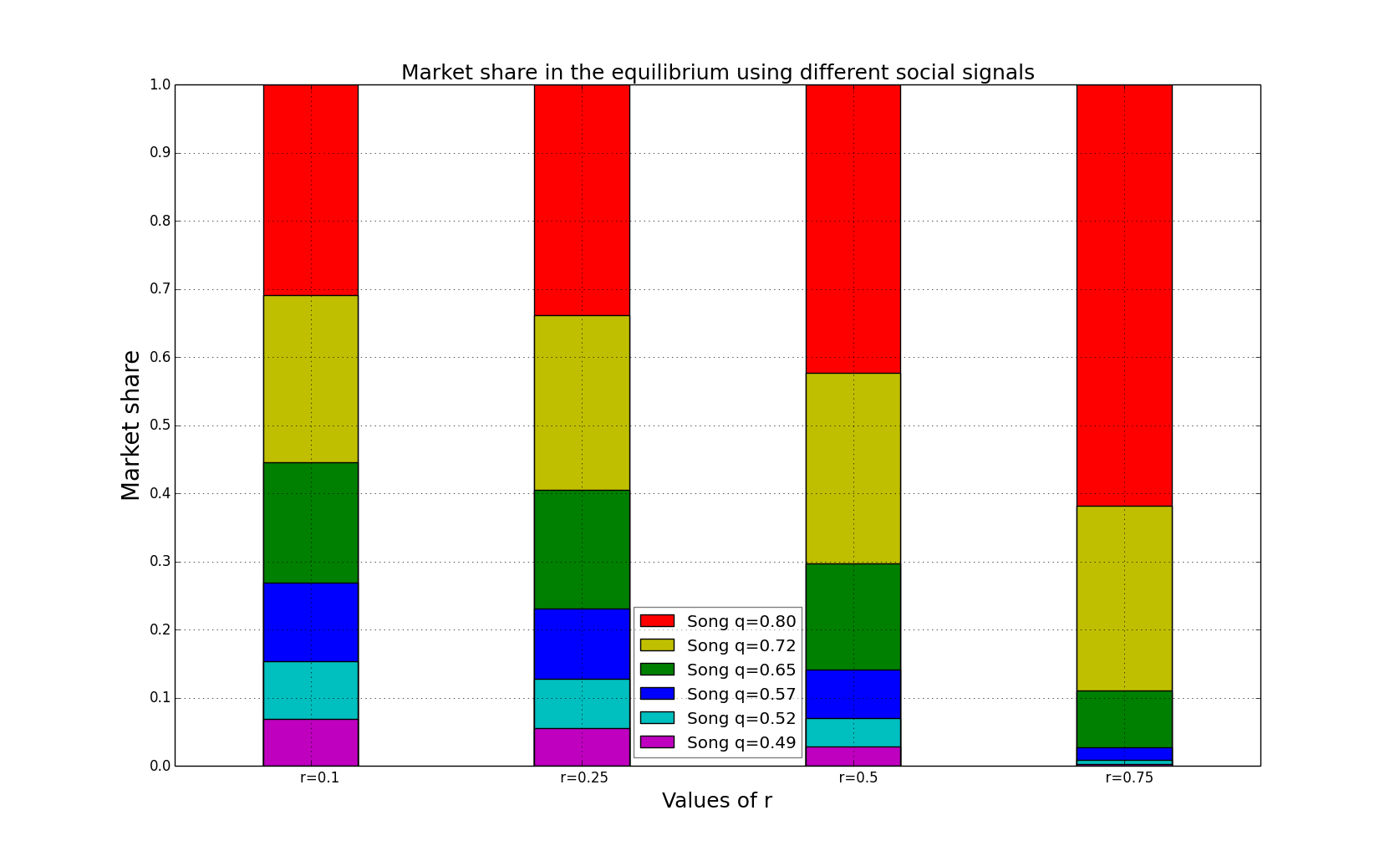}
\end{centering}
\centering{}
\vspace{-3.5mm}
\caption{\small Market shares of 6 songs and their qualities, using a social signal $f(x)=x^r,\quad r\in \{0.1, 0.25, 0.5,0.75\}$.}
\label{fig:fairness}
\vspace{-2.5mm}
\end{figure}

Figure \ref{fig:fairness} shows how the market is distributed in the
equilibrium among 6 songs. The qualities,
appeals, and visibilities are given by
\begin{tabbing}
 q = [0.80, 0.72, 0.65, 0.57, 0.52, 0.49] \\
 a = [0.38, 0.36, 0.27, 0.60, 0.77, 0.78] \\
 v = [0.80, 0.75, 0.62, 0.48, 0.40, 0.35]
\end{tabbing}
\noindent
and the social signals are of the form $f(x)=x^r$ $(r\in \{0.1, 0.25,
0.5, 0.75\})$. Each stacked bar represents the proportion of the
market for the 6 songs for a given social signal. Songs with better
qualities (i.e., the top 2 songs represented in red and yellow
respectively) have larger market shares and their market shares
increase with $r$. In contrast, the market shares of the lower-quality
songs (i.e., cyan and purple respectively) decrease when $r$
increases. These results indicate that social influence has a
beneficial effect on the market: it drives customers towards the
better products, while not going to a monopoly as long as $r < 1$.

\subsection{Market Predictability}

\begin{figure}[!t]
\vspace{-5mm}
\centering
\includegraphics[scale=0.40]{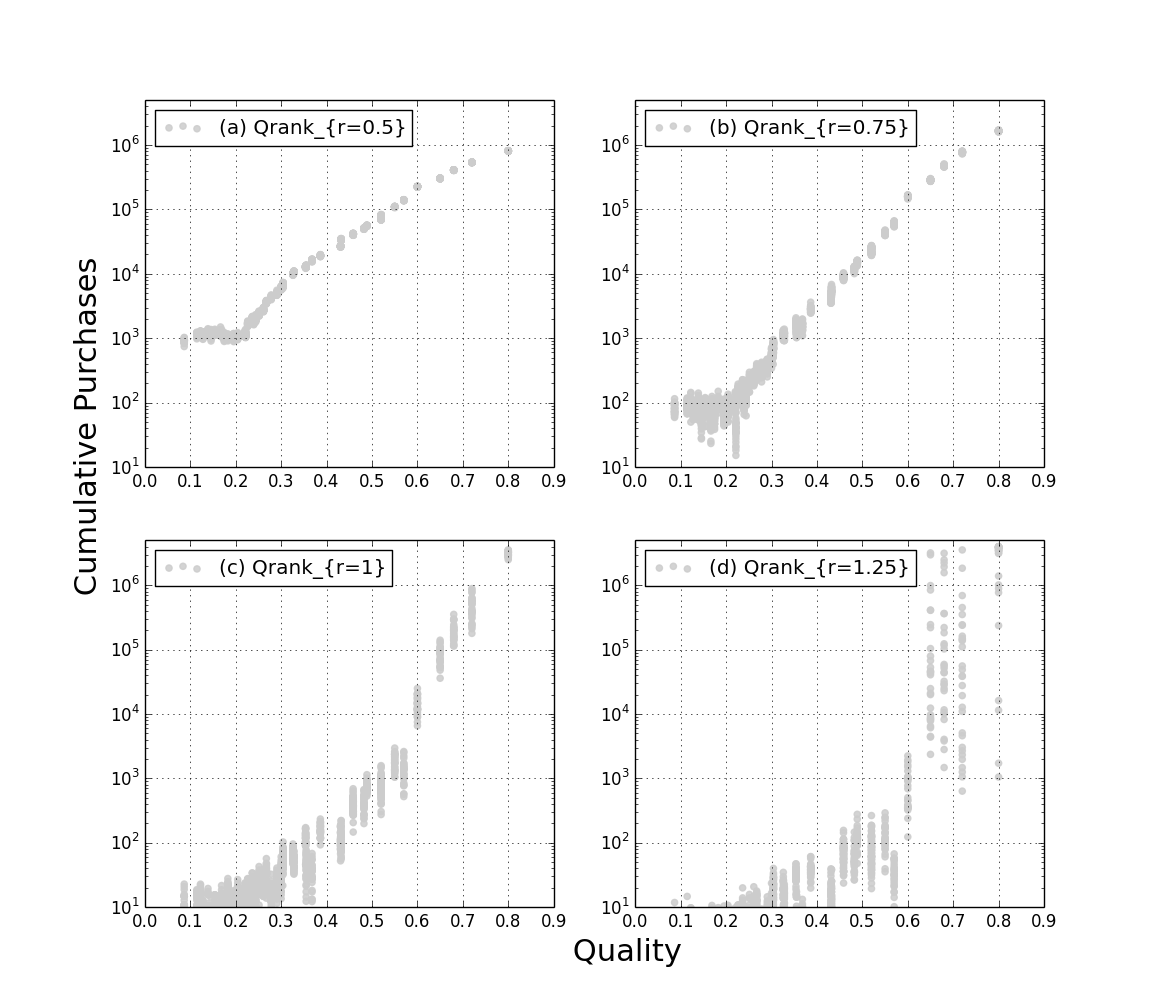}
\centering{}
\vspace{-3mm}
\caption{\small Distribution of downloads versus the qualities, using social signals $f(x)=x^r,\quad r\in \{0.5, 0.75, 1, 1.25\}$. The results are for the first setting where the quality and appeal of each song are chosen independently. The songs are ordered by increasing quality along the x-axis. The y-axis is the number of downloads.}
\label{fig:predic1}
\end{figure}

\begin{figure}[!t]
\vspace{-5mm}
\centering
\includegraphics[scale=0.40]{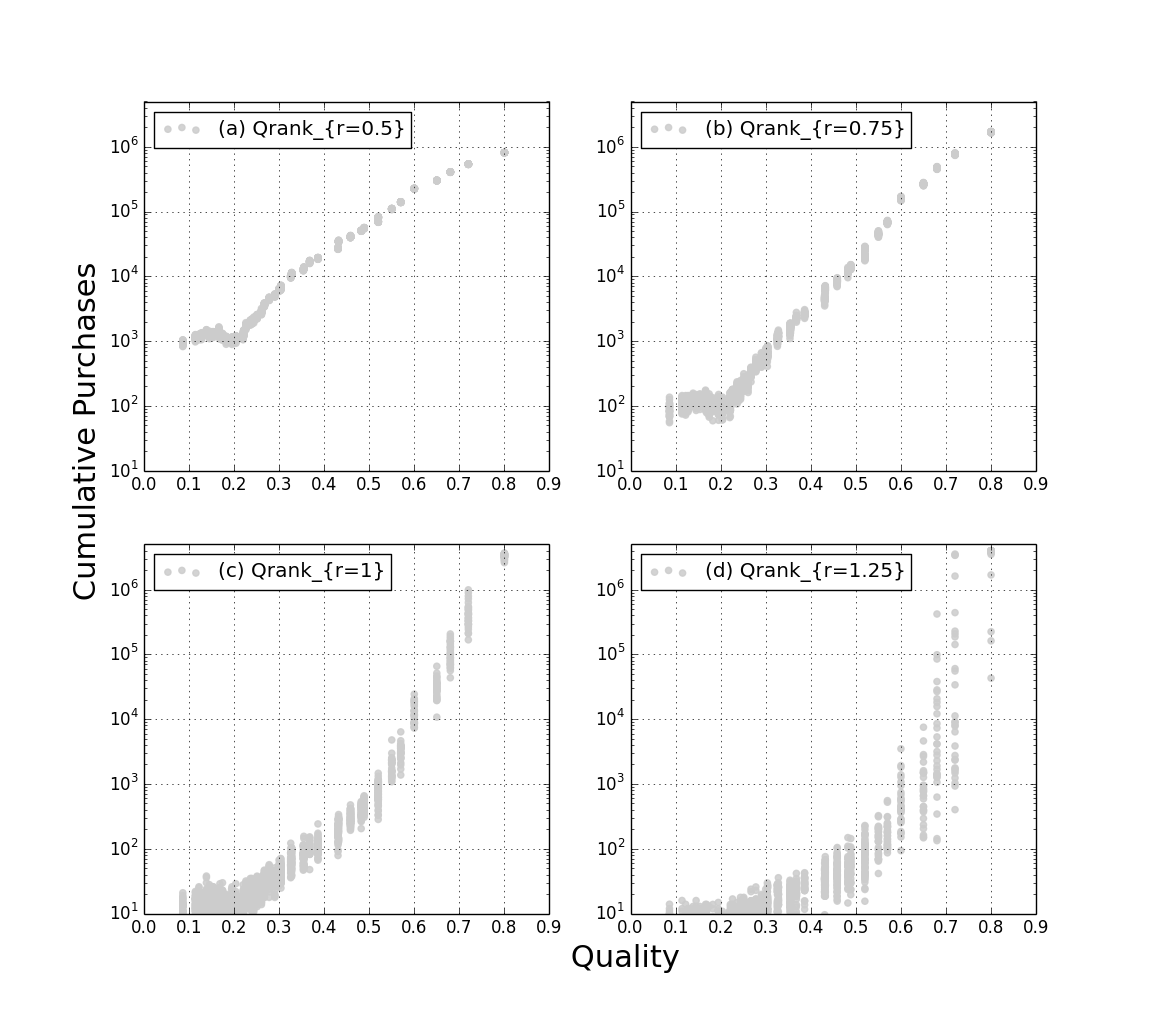}
\centering{}
\vspace{-3mm}
\caption{\small Distribution of downloads versus the qualities, using social signals $f(x)=x^r,\quad r\in \{0.5, 0.75, 1, 1.25\}$. The results are for the first setting where the quality and appeal of each song are anti-correlated. The songs are ordered by increasing quality along the x-axis. The y-axis is the number of downloads.}
\vspace{-1mm}
\label{fig:predic2}
\end{figure}

This section depicts the predictability of the market for various
values of $r$ and the number of downloads per song as a function of
its quality. Figures \ref{fig:predic1} and \ref{fig:predic2} depict
the results for the two quality/appeal settings discussed previously.
The figures display the results of 40 experiments for each setting
with 1 million arrivals. Each experiment contributes 50 data points,
i.e., the number of downloads for each song, and all the data points
for the 40 experiments are displayed in the figures.

In the plots, the
x-axis represents the song qualities and the y-axis the number of
downloads.  A dot at location $(q,d)$ indicates that the song with
quality $q$ had $d$ downloads in an experiment. Obviously, there can be
several dots at the same location. For $r \in \{0.5,0.75,1\}$, the
market is highly predictable and there is little variation in the song
downloads. For $r=1$, the market converges to a monopoly for the song
of highest quality, confirming the results from
\citep{Rank,ICWSM16SI}. Finally, for $r=1.25$, the market exhibits
significant unpredictability, as suggested by the theoretical
results. In this case, the equilibria are monopolies for various songs
but it is hard to predict which song will dominate the market.

Note
also that the unpredictability of the market increases significantly
for $r=1.25$ when the appeal and quality of the songs are
anti-correlated. This is not the case for $r \in \{0.5,0.75\}$. To evaluate the statistical significance of these results, we measure
the market unpredictability as suggested by Salganik et
al. \cite{salganik2006experimental}.  The unpredictability $u_i$ for
product $i$ is defined as the average difference in market share for
that product over the 40 experiments:
\[
u_i=\frac{1}{\left( \begin{array}{c} 40 \\ 2 \end{array} \right)} \sum_{w=1}^{40}\sum_{w^*=w+1}^{40}|\phi_{i,w}-\phi_{i,w^*}|,
\]
where $\phi_{i,w}$ is the final market share of product $i$ in
experiment $w$. We then computed the overall unpredictability for each social signal $r\in \{0.5, 0.75, 1, 1.25\}$: $U=\dfrac{\sum_{j=1}^nu_j}{n}.$

Figure \ref{fig:unpreIND} shows the average unpredictability $U$ and the standard deviation for the different social signals, using the
same data as in Figures \ref{fig:predic1} and \ref{fig:predic2}
(Figure \ref{fig:unpreIND} $\mathsf{a}$ and Figure
\ref{fig:unpreIND} $\mathsf{b}$ respectively). We also performed
Mann-Whitney U tests, comparing the values of $U$ for pairs of social
signals.  In all cases, a social signal $r< 1$ is significantly
more predictable than the signal $r=1.25$ ($p$-value<0.05). Comparisons
between $r=0.5$ and $r=0.75$ and $r=0.75$ and $r=1$ also show
statistically significant differences in unpredictability. For instance,
for the anti-correlated setting, the $p$-values for the various pairwise comparisons ({\it first column} is less unpredictable than {\it second column})
are given in Table \ref{pvalue1}
\begin{table}[h]
\vspace{2mm}
\centering
\begin{tabular}{lll}
\hline
social signal & social signal & $p$-value \\
\hline
0.5 & 0.75 &0.0029 \\
0.5 & 1  &  8.73e-07 \\
0.5 & 1.25 & 4.24e-10 \\
0.75 & 1 & 0.0003 \\
0.75 & 1.25 & 2.10e-07 \\
1 & 1.25  & 0.0022  \\
\hline \\
\end{tabular}
\vspace{-3mm}
\caption{\small p-values of the  hypothesis: first column is less predictable than second column. Case  $a_i,q_i$ anti-correlated.}
\label{pvalue1}
\vspace{2mm}
\end{table}

\noindent
For the independent setting, Table \ref{pvalue2} shows the pairwise comparisons that are also statistically significant in that case:
\begin{table}[!h]
\vspace{-1mm}
\centering
\begin{tabular}{lll}
\hline
social signal & social signal & p-value \\
\hline
0.5 & 1 & 0.0414 \\
0.5 & 1.25 & 0.0034 \\
0.75 & 1.25 & 0.0266 \\
\hline \\
\end{tabular}
\vspace{-4mm}
\caption{\small p-values of the  hypothesis: first column is less predictable than second column. Case $a_i,q_i$ independent.}
\label{pvalue2}
\end{table}

\begin{figure}[!t]
\vspace{-3mm}
\centering
\includegraphics[scale=0.47]{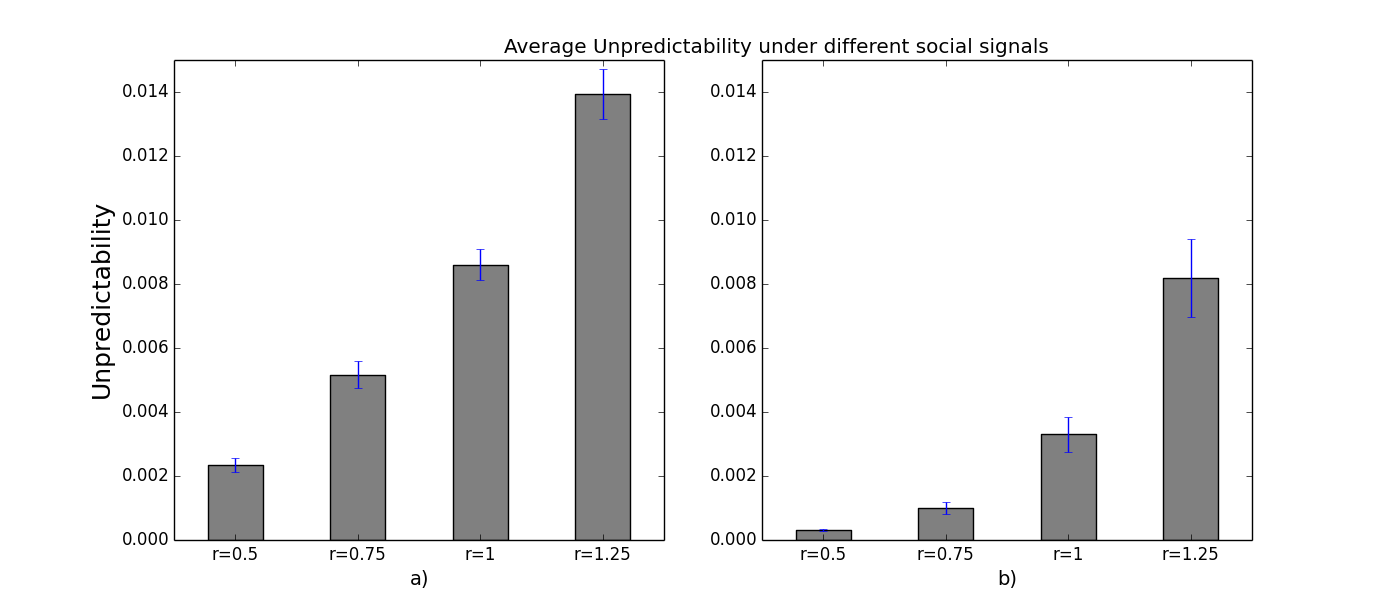}
\centering{}
\vspace{-5mm}
\caption{\small Average unpredictability (grey bars), using social signals $f(x)=x^r,\quad r\in \{0.5, 0.75, 1, 25\}$. $\mathsf{a})$ shows the results for the independent setting, and $\mathsf{b})$ for the anti-correlated setting. Both cases consist of 40 experiments with 1 million iterations each. Blue lines represent the respective standard deviations.}
\label{fig:unpreIND}
\end{figure}

\noindent Figure \ref{fig:predicqd} compares the predictability of Q-rank and
D-rank for the first setting of Quality/Appeal. For each ranking, two
different social signals were used ($r=0.5$ and $r=1$) and the figure
displays the result of 50 experiments, consisting in 1 million iterations. Two phenomena can be observed. First, sublinear signals seem to
help the D-rank, making the outcome less chaotic (first
column). Second, Q-rank clearly performs better than D-rank and
exhibits much less unpredictability.


\begin{figure}[!t]
\vspace{-5mm}
\begin{centering}
\includegraphics[scale=0.42]{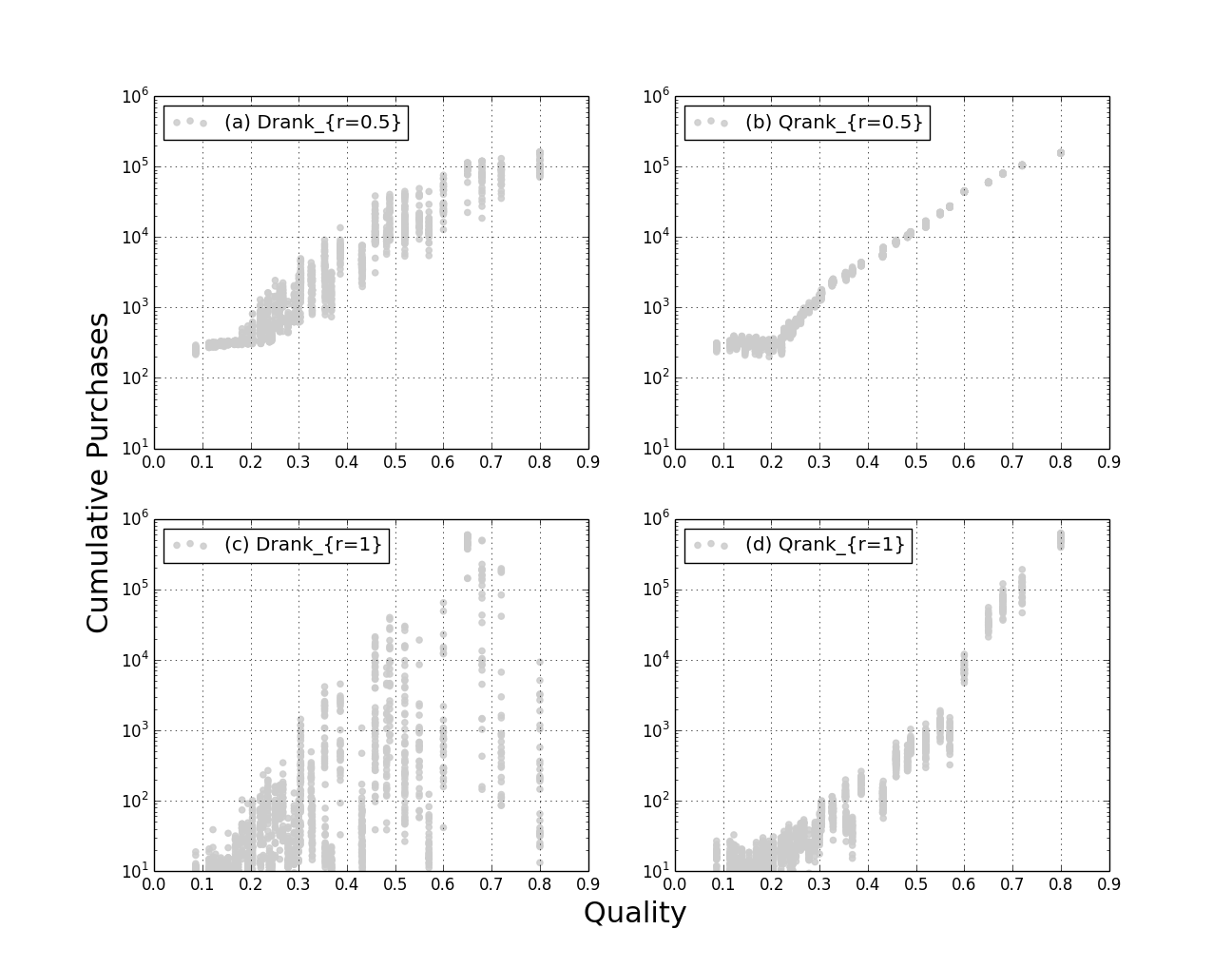}
\end{centering}
\centering{}
\vspace{-3mm}
\caption{\small Distribution of purchases versus product qualities for
  50 experiments with 1 million users. Figures (a) and (b) use a
  social signal $f(x)=x^{0.5}$, Figure (a) shows the results for the
  popularity ranking and Figure (b) for the quality ranking.  Figures
  (c) and (d) use the social signal $f(x)=x$, Figure (c) shows the
  results for the popularity ranking and Figure (d) for the quality
  ranking.}
\label{fig:predicqd}
\end{figure}

\subsection{Performance of the Market}

Figures \ref{fig:ranki1} and \ref{fig:ranki2} report results about the
performance of the markets as a function of the social influence
signals. The figures report the average number of downloads over time among 50 experiments,
for the quality and popularity rankings as a function of the social
signals. There are a few observations that deserve mention.

\begin{enumerate}
\item For the quality ranking, the expected number of downloads
  increases with the strength of the social signal as $r$ approaches
  1. The equilibrium when $r=1$ is optimal asymptotically and assigns
  the entire market share to the song of highest quality. When $r=2$,
  the situation is more complicated. The figure shows that the market
  efficiency can further improve if $r=2$. However, when the
  simulation is run for more iterations (a result not shown in the
  figure), the market efficiency decreases slightly compared to $r=1$,
  which is consistent with the theory since there is no guarantee that
  the monopoly for $r>1$ is for the song of highest quality.


\item The popularity ranking is always dominated by the quality
  ranking and the benefits of the quality ranking increase as $r$
  approaches 1 from below.

\item The popularity ranking in the second setting when $r=2$, obtains
  nearly a third of the expected downloads than the quality ranking.
\end{enumerate}

\begin{figure}[!t]
\vspace{-5mm}
\centering
\includegraphics[scale=0.55]{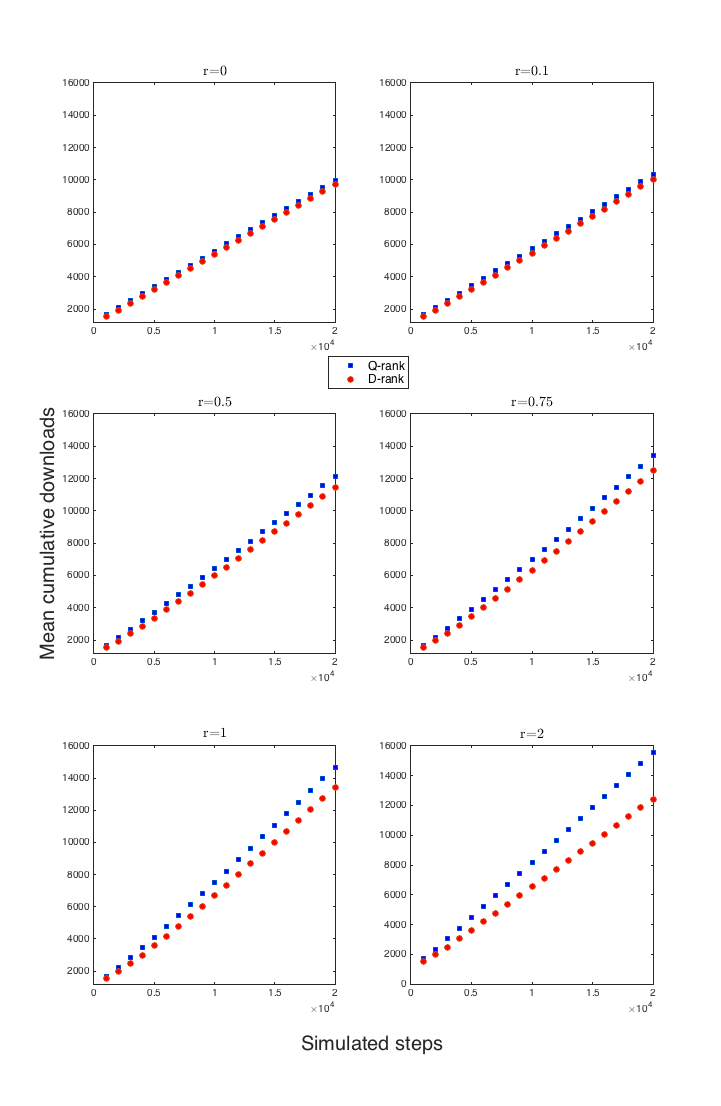}
\vspace{-1.2cm}
\caption{\small The Average Number of Downloads over Time for the Quality and Popularity Rankings for Various Social Signals in the First Setting for Song Appeal and Quality.}
\label{fig:ranki1}
\end{figure}

\begin{figure}[!t]
\centering
\includegraphics[scale=0.52]{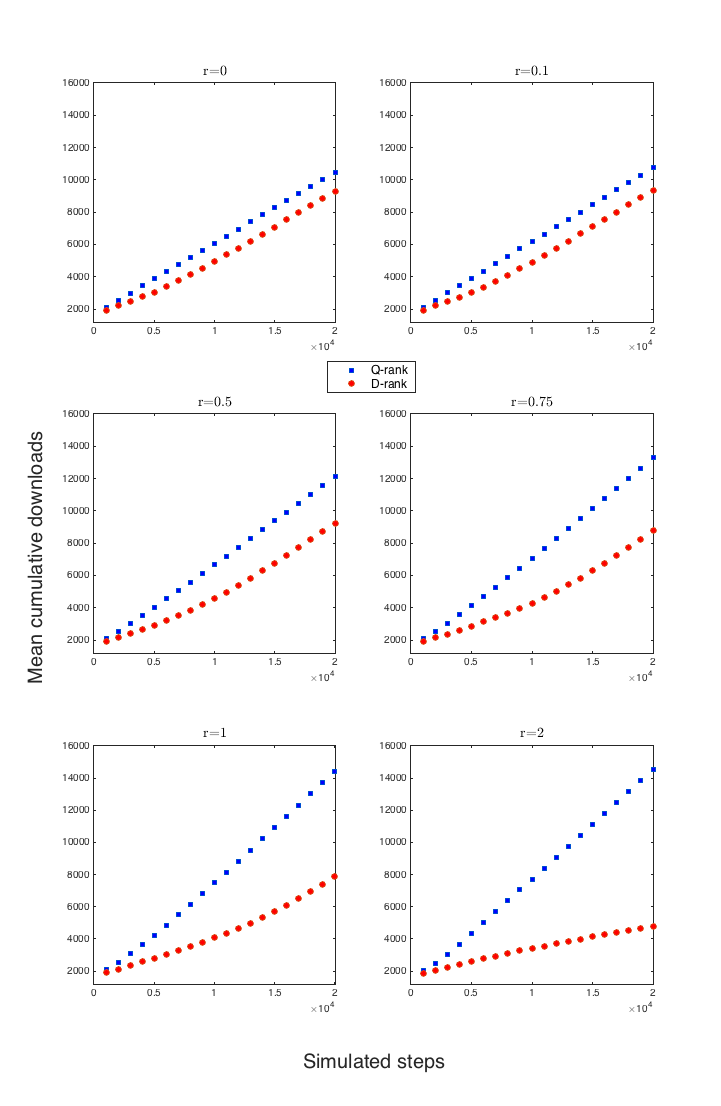}
\vspace{-0.5cm}
\caption{\small The Average Number of Downloads over Time for the Quality and Popularity Rankings for Various Social Signals in the Second Setting for Song Appeal and Quality.}
\label{fig:ranki2}
\end{figure}

\section{Additional Observations on  Sublinear Social Signals}
\label{section:additional}

\paragraph{The Benefits of Social Influence}

A linear social signal has been shown to be beneficial to the market
efficiency, i.e. it maximises the expected number of downloads. This result was
proved by \cite{Rank} for the performance ranking and by
\cite{ICWSM16SI} for any static ranking such as the quality
ranking. Unfortunately, sublinear social signals are not always
beneficial to the market in that sense, as one can see in Example \ref{example_social_signal_sublinear}.
Consider, once again, the quality ranking and assume that $q_1 \geq \ldots \geq q_n$. When there is no social
signal, by following the idea in Equation \eqref{probkrum} and taking
$f(\phi_i)=\beta \phi_i+\alpha a_i$, with $\beta=0$, the probability
of trying product $i$ is given by
\[
p_i^I=\frac{v_{i}a_i}{\sum_{j=1}^nv_{j}a_j}.
\]
In this case, the expected number of purchases per period is
\[
\sum_{i=1}^n p_i^I q_i.
\]
On the other hand, with a social signal, the probability of
trying product $i$ at time $t$ is
\[
P_i(\phi^t) =  \frac{v_{i} f(\phi_i^t)}{\sum_{j=1}^n v_{j} f(\phi_j^t)}
\]
and the expected number of purchases per period at the equilibrium is given by
\[
\sum_{i=1}^nP_i(\phi^*)q_i =  \sum_{i=1}^n\frac{v_{i} q_if(\phi_i^*)}{\sum_{j=1}^n v_{j} f(\phi_j^*)}.
\]
The following example shows that, under a sublinear social signal, the
expected number of purchases (per period) at equilibrium, i.e.,
$\sum_{i=1}^nP_i(\phi^*)q_i$, can be lower than the expected number of
purchases when no social signal is used, i.e., $\sum_{i=1}^n p_i^I
q_i.$
\begin{example} \label{example_social_signal_sublinear}
  Consider a 2-dimensional T-O market with social signal
  $f(x)=x^{0.5}$, where the qualities, visibilities, and appeals are
  given by
\begin{itemize}
\item $q_1=1$, $q_2=0.4$,
\item $v_1=1$, $v_2=1$,
\item $a_1=1$, $a_2=0.3$.
\end{itemize}
The expected number of purchases at equilibrium for the case with
social signal is given by
\[ \frac{v_1q_1(\phi_1^*)^r+v_2q_2(\phi_2^*)^r}{v_1(\phi_1^*)^r+v_2(\phi_2^*)^r}=\frac{v_1q_1(v_1q_1)^{r/(1-r)}+v_2q_2(v_2q_2)^{r/(1-r)}}{v_1(v_1q_1)^{r/(1-r)}+v_2(v_2q_2)^{r/(1-r)}}=\frac{1+(0.4)^2}{1+0.4}\sim0.83,\]
while, for the case without social signal, it is given by
\[ \frac{v_1q_1a_1+v_2q_2a_2}{v_1a_1+v_2a_2}=\frac{1+0.3(0.4)}{1+0.3}\sim0.86.\]
\end{example}
\noindent

This simple example, in which the qualities and appeals are positively
correlated, shows that if customers follow a sublinear social
influence signal ($r=0.5$), the market efficiency gets reduced by
around 3 percent (with respect to not showing them the social signal).
In contrast, when $r=1$, social influence drives the market towards a
monopoly, which leads to an asymptotically optimal market that assigns
the entire market share to the highest quality product (which may be
undesirable in practice).  Note that, once the qualities and appeals
have been recovered (using, say, Bernoulli sampling as suggested in
\citep{Rank}), one could potentially decide whether to use the social
influence (in case it is sublinear $r < 1$): Simply compare the
expected number of purchases in both settings, using the equilibrium
for the social influence case and the formula for the case with no
social signal.

\paragraph{Optimality of the Quality Ranking}

When $r=1$, it has been shown that the quality ranking is optimal
asymptotically: It maximises the expected number of purchases
\cite{ICWSM16SI}. If another static ordering is used, the market will
converge to the product that has the highest quality when scaled by
its visibility. However, when $0<r<1$, the quality ranking
is no longer guaranteed to be optimal asymptotically.

\begin{example}
Consider a 3-dimensional T-O market with a social signal $f(x)=x^{r},
r=0.3$, and the following values for qualities and visibilities:
\begin{itemize}
\item $q_1=1$, $q_2=0.261$, $q_3=0.002$
\item $v_1=1$, $v_2=0.720$, $v_3=0.229$
\end{itemize}
then, using quality ranking we would end up with an expected number of purchases at equilibrium, given by
\begin{align*}
\sum_{i=1}^nP_i(\phi^*)q_i&=\frac{v_1q_1(v_1q_1)^{r/(1-r)}+v_2q_2(v_2q_2)^{r/(1-r)}+v_3q_3(v_3q_3)^{r/(1-r)}}{v_1(v_iq_i)^{r/(1-r)}+v_2(v_2q_2)^{r/(1-r)}+v_3(v_3q_3)^{r/(1-r)}}\\
&=\frac{1+(0.720*0.261)^{10/7}+(0.229*0.002)^{10/7}}{1+0.720(0.720*0.261)^{3/7}+0.229(0.229*0.002)^{3/7}}\sim 0.8026,
\end{align*}

on the other hand, if we decide to place the third product (quality $q_3=0.002$)  in the second position, and the second product (quality $q_2=0.261$) in the third position of the ranking,  we get
\begin{align*}
\sum_{i=1}^nP_{\sigma_i}(\phi^*)q_i&=\frac{v_1q_1(v_1q_1)^{r/(1-r)}+v_2q_3(v_2q_3)^{r/(1-r)}+v_3q_2(v_3q_2)^{r/(1-r)}}{v_1(v_iq_i)^{r/(1-r)}+v_2(v_2q_3)^{r/(1-r)}+v_3(v_3q_2)^{r/(1-r)}}\\
&=\frac{1+(0.720*0.002)^{10/7}+(0.229*0.261)^{10/7}}{1+0.720(0.720*0.002)^{3/7}+0.229(0.229*0.261)^{3/7}}\sim 0.9154.
\end{align*}
\end{example}

\noindent
The intuition behind the previous example is that if there exists a
product which is much better than the rest, the best decision is to
exhibit it in the first position and place, in the second position,
the lowest quality product to make the first product is even more
appealing. It is an open problem to determine whether there is
a polynomial-time algorithm to find an optimal ranking.

%
%
%

%

\section{Discussion and Conclusion}
\label{section:conclusion}

This paper studied the role of social influence in trial-offer markets
where customer preferences are modeled by a generalisation of a
multinomial logit. In this model, both position bias and social
influence impact the products tried by consumers.

The main result of the paper is to show that trial-offer markets, when
the ranking of the products is fixed, converge to a unique equilibrium
for sublinear social signals of the form $\phi_i^r$, where $\phi_i$
represents the cumulative market share of product $i$. Of particular
interest is the fact that the equilibrium does not depend on the
initial conditions, e.g., the product appeals, but only depends on the
product qualities. Moreover, when the products are ranked by quality,
i.e., the best products are assigned the highest visibilities, the
equilibrium is such that the better products receive the largest
market shares, which increase as $r$ increases for the best products
(as long as $r<1$). The equilibrium for a sublinear social signal
contrasts with the case with $r=1$, where the market goes to a
monopoly for the highest quality product (under the quality
ranking). In the sublinear case, the market shares reflect product
quality but no product becomes a monopoly. The paper also shows that,
when $r>1$, the market becomes more unpredictable. In particular, the
inner equilibrium, which assigns a positive market share to all
products, is unstable and the market is likely to converge to a
monopoly for some product. However, which product becomes the monopoly
depends on the initial conditions.

Simulation results on a setting close to the original \musiclab{}
complemented the theoretical results. They show that the market
converges quickly to the equilibrium for a sublinear social signal and
that the convergence speed depends on the social signal strength. The
simulation results also illustrate how the market shares of the
highest (resp. lowest) quality products increase (resp. decrease) with
$r$. As expected, when $r \leq 1$, the market is shown to be highly
predictable, while it exhibits a lot of randomness when $r>1$. The
simulation results also show the benefits of social influence for
market efficiency, and
demonstrate that the quality ranking once again outperforms the
popularity ranking.

Overall, these results shed a new light on the role of social
influence in trial-offer markets and provide a comprehensive overview
of the choices and tradeoffs available to firms interested in
optimising their markets with social influence. In particular, they
show that social influence does not necessarily make markets
unpredictable and is typically beneficial when the social signal is
not too strong. Moreover, ranking the products by quality appears to
be a much more effective policy than ranking products by popularity
which may induce unpredictability and market inefficiency. The results
also show that sublinear social signals give decision makers the
ability to trade market efficiency for more balanced market shares.

Perhaps, the main contribution of this paper is to show that markets
under social influence are very sensitive to various design
choices. The findings in \cite{salganik2006experimental} used the
popularity ranking, which significantly affected their conclusions
about market unpredictability and efficiency. The theoretical and
simulation results of this paper, together with those in
\citep{Rank,ICWSM16SI} for the case $r=1$, show that the market is
highly predictable when using any static ranking and $r\leq1$.
Moreover, the quality ranking is optimal asymptotically when $r=1$ and
dominates the popularity ranking in all our simulations which were
modeled after the \musiclab{}. This does not diminish the value of the
results by Salganik et al.  \citep{salganik2006experimental} who
isolated potential pathologies linked to social influence. But this
paper shows that these pathologies are not inherent to the market but
are a consequence of specific design choices in the experiment: The
strength of the social signal and the ranking policy. Interestingly,
it is only for a linear social signal that social influence can be
shown to be always beneficial in expectation. Fortunately, for
sublinear social signals, we can determine a priori if social
influence is beneficial, given the analytic form of the equilibrium.

There are at least two potential research directions following this
paper that worth investigating. First, it would be extremely valuable
to construct large-scale cultural market experiment, varying the
strengths of the social signal to complement our simulation
results. Second, it would be interesting to extend our results to
other settings including assortment problems (where the firm can
select not only how to rank products but also which ones should be
shown) \cite{4OR} and to classical cascade models with a social signal
\citep{wang2012}.


\bibliographystyle{abbrv}
\bibliography{references}

\newpage

\appendix

\section{Key Results from Bena{\"\i}m (1999) \cite{Benaim99} }
Consider the set of solutions of the differential equation \eqref{RMC}, we say that $\Upsilon=(\Upsilon_t)_{t\in\R}$ is  the {\it flow induced} by the vector field $F$, where $\Upsilon_t$ 	are the local unique solutions of \eqref{RMC} with $x^0=x_0\in \Delta^{n-1}$. Bena{\"\i}m defines the following useful concept:
A continuous function $X:\R_+\to \R^n$ is an {\it Asymptotic pseudo-trajectory} for $\Upsilon$ if for any $T>0$
	\[\lim_{t\to\infty} \sup_{0\leq h\leq T}dist(X(t+h),\Upsilon_h(X(t)))=0.\]
\noindent
Recall now that our Robbins-Monro Algorithm \eqref{RMA-MS} is defined by
\[
\phi^{k+1} = \phi^k + \gamma^{k+1} (F(\phi^k) + U^{k+1}).
\]
Let $\tau_k=\sum_{i=1}^k\gamma^i, \tau_0=0$ and define the affine interpolated process $Z(t)$:
\begin{equation}
\label{interp}
Z(t)=\phi^k+[t-\tau_k]\frac{\phi^{k+1}-\phi^k}{\gamma^{k+1}},\quad \tau_k\geq t\geq \tau_{k+1}.
\end{equation}
Consider also the map $m:\R_+\to \N$ defined by $m(t)=\sup\{k\geq0: t\geq \tau_k\}$.

\begin{proposition}[Proposition 4.1 in \cite{Benaim99}]
\label{prop:4.1}
Let $F$ be a bounded locally Lipschitz vector field. Assume that
\begin{enumerate}
\item[A1.1] For all $T>0$,
\[\lim_{l\to\infty}\sup\{\|\sum_{i=n}^{k-1}\gamma^{i+1}U^{i+1}\|:k=n+1,...,m(\tau_l+T)\}=0.\]
\item[A1.2] $\displaystyle \sup_k\|\phi^k\|<\infty.$
\end{enumerate}
Then the interpolated process $Z(t)$ is an asymptotic pseudotrajectory of the flow induced by $F$.
\end{proposition}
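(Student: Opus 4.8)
The plan is to prove the proposition by the classical comparison of the affine interpolation $Z$ with the flow $\Upsilon$ over windows of fixed length $T$, splitting the discrepancy into a \emph{noise} contribution that is killed by A1.1 and a \emph{discretisation} contribution that is absorbed by a Gr\"onwall estimate relying on the locally Lipschitz character of $F$ and on the boundedness hypothesis A1.2. First I would reduce to checking the defining limit along the grid times $\tau_n$ only: since $\gamma^k\to 0$, every $t$ lies within $\gamma^{m(t)+1}$ of $\tau_{m(t)}$, and over an interval that short both $Z$ --- whose increments are $\gamma^{i+1}(F(\phi^i)+U^{i+1})$ with $F$ bounded --- and the flow move by $o(1)$ as $t\to\infty$ (using $\|F\|_\infty<\infty$, the fact --- a consequence of A1.1 --- that $\gamma^{k+1}U^{k+1}\to 0$, and continuity of $\Upsilon$ in its initial condition). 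Hence it suffices to show $\sup_{0\le h\le T}\|Z(\tau_n+h)-\Upsilon_h(Z(\tau_n))\|\to 0$ as $n\to\infty$.

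Next, fix $n$, take $h\in[0,T]$, and let $k=m(\tau_n+h)$. Iterating \eqref{RMA-MS} and using $\gamma^{i+1}=\tau_{i+1}-\tau_i$ gives $Z(\tau_n+h)=\phi^n+\int_0^h F(\bar Z(\tau_n+s))\,ds+\sum_{i=n}^{k-1}\gamma^{i+1}U^{i+1}+R_{n,h}$, where $\bar Z(t):=\phi^{m(t)}$ is the step companion of $Z$ and $R_{n,h}=(\tau_n+h-\tau_k)U^{k+1}$ is the leftover from the last, partial step. Comparing with $\Upsilon_h(\phi^n)=\phi^n+\int_0^h F(\Upsilon_s(\phi^n))\,ds$ yields
\[
Z(\tau_n+h)-\Upsilon_h(\phi^n)=\int_0^h\bigl(F(\bar Z(\tau_n+s))-F(\Upsilon_s(\phi^n))\bigr)\,ds+\sum_{i=n}^{k-1}\gamma^{i+1}U^{i+1}+R_{n,h}.
\]

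I would then bound the three pieces on the right. By A1.1 the weighted-noise partial sum $\sum_{i=n}^{k-1}\gamma^{i+1}U^{i+1}$, and likewise $R_{n,h}$ (a one-term tail of such a sum), tend to $0$ uniformly in $h\in[0,T]$ as $n\to\infty$. By A1.2 the iterates $\{\phi^k\}$ lie in a bounded set, and since $F$ is bounded the flow segments $\{\Upsilon_s(\phi^n):0\le s\le T\}$ all remain in one fixed compact set $K$; on $K$ the locally Lipschitz field $F$ has a Lipschitz constant $L$. Writing $\varepsilon_n:=\sup_{i\ge n}\|\phi^{i+1}-\phi^i\|$, we have $\varepsilon_n\to 0$ and $\|\bar Z(\tau_n+s)-Z(\tau_n+s)\|\le\varepsilon_n$, so the integrand above is at most $L\varepsilon_n+L\|Z(\tau_n+s)-\Upsilon_s(\phi^n)\|$. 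With $u(h):=\|Z(\tau_n+h)-\Upsilon_h(\phi^n)\|$ this gives $u(h)\le\delta_n+L\int_0^h u(s)\,ds$ for a sequence $\delta_n\to 0$ independent of $h$, and Gr\"onwall's lemma yields $\sup_{0\le h\le T}u(h)\le\delta_n e^{LT}\to 0$, which is the required conclusion.

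The main obstacle is the bookkeeping in the last two steps: one must verify that the Riemann-sum-versus-integral discrepancy, the partial-step remainder $R_{n,h}$, and the piecewise-linear-versus-step gap $\varepsilon_n$ all collapse into a single error $\delta_n$ that is genuinely uniform in $h\in[0,T]$ and vanishes, using only $\gamma^k\to 0$, $\|F\|_\infty<\infty$, and A1.1. A secondary point requiring care is the reduction to a \emph{global} Lipschitz constant for $F$: it needs the union of the relatively compact set $\{\phi^k\}$ with all flow segments of length $\le T$ issued from it to lie inside one compact set, which is precisely where A1.2 together with boundedness of $F$ is used.
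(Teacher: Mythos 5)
This proposition is not proved in the paper at all: it is imported verbatim from Bena\"{\i}m (1999) as a black box (the appendix only states it and then uses it, together with Propositions 4.2 and Theorem 5.7, to prove Theorem \ref{thm:ICT}). So there is no in-paper proof to compare against; what you have written is, in effect, a reconstruction of Bena\"{\i}m's own argument, and it is the standard and correct one. The decomposition of $Z(\tau_n+h)-\Upsilon_h(\phi^n)$ into the integral of $F(\bar Z)-F(\Upsilon_s(\phi^n))$, the weighted-noise partial sum, and the partial-step remainder is exactly the canonical route, and your handling of the two delicate points is sound: the remainder $R_{n,h}$ and the step size $\varepsilon_n$ are both controlled as differences of consecutive partial sums covered by A1.1 (together with $\gamma^k\to 0$ and $\|F\|_\infty<\infty$), and the single compact set on which a global Lipschitz constant for $F$ is extracted is obtained from A1.2 plus the bound $\|\Upsilon_s(x)-x\|\le T\|F\|_\infty$ on flow segments of length at most $T$. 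One cosmetic remark: the statement as transcribed in the paper has an index mismatch (the limit is taken in $l$ while the partial sums start at $i=n$ and the window is $m(\tau_l+T)$); in Bena\"{\i}m's original these indices coincide, and your proof implicitly uses the corrected version, which is the intended one.
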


\begin{proposition}[Proposition 4.2 in \cite{Benaim99}]
\label{prop:4.2}
Let $\phi^k$ be the Robbins-Monro Algorithm \eqref{RMA-MS}. Suppose that, for some $q\geq 2$,
\[\sup_k \EE(\|U^{k+1}\|^q)<\infty,\]
and
\[\sum_k[\gamma^k]^{1+q/2}<\infty.\]
Then assumption $A1.1$ of Proposition A.1 holds with probability 1.
\end{proposition}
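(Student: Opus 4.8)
The plan is to isolate the noise part of the recursion as a martingale and control its oscillation over finite ``time'' windows. Define
\[
M^k=\sum_{i=0}^{k-1}\gamma^{i+1}U^{i+1},\qquad M^0=0,
\]
so that $\sum_{i=n}^{k-1}\gamma^{i+1}U^{i+1}=M^k-M^n$. Because the $\gamma^{i+1}$ are deterministic and $\EE[U^{i+1}\mid\mathcal F^{i}]=0$, the sequence $(M^k)_{k\ge0}$ is an $\R^n$-valued $(\mathcal F^k)$-martingale whose increment from $k-1$ to $k$ is $\gamma^{k}U^{k}$. With $n=l$, assumption A1.1 is precisely the statement that, for every fixed $T>0$,
\[
\Delta_l(T):=\max\bigl\{\|M^k-M^l\|:\ l<k\le m(\tau_l+T)\bigr\}\ \longrightarrow\ 0\quad\text{almost surely as }l\to\infty .
\]
So the whole proof reduces to a moment estimate for $\Delta_l(T)$ followed by an almost-sure upgrade.

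For the moment estimate, fix $l$ and note that $(M^k-M^l)_{k\ge l}$ is again a martingale, so the Burkholder--Davis--Gundy inequality gives a universal constant $c_q$ with
\[
\EE\bigl[\Delta_l(T)^q\bigr]\ \le\ c_q\,\EE\Bigl[\Bigl(\textstyle\sum_{i=l+1}^{m(\tau_l+T)}(\gamma^i)^2\|U^i\|^2\Bigr)^{q/2}\Bigr].
\]
The number of summands here, $m(\tau_l+T)-l$, is unbounded as $l\to\infty$ (since $\gamma^i\to0$), so one cannot bound the $(q/2)$-th power termwise; the trick is that the \emph{cumulative} step size over the window is small: by the definition of $m$, $W:=\sum_{i=l+1}^{m(\tau_l+T)}\gamma^i=\tau_{m(\tau_l+T)}-\tau_l\le T$. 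Treating $(\gamma^i/W)_i$ as a probability vector and using convexity of $x\mapsto x^{q/2}$ (valid since $q/2\ge1$) via Jensen's inequality,
\[
\Bigl(\textstyle\sum_i(\gamma^i)^2\|U^i\|^2\Bigr)^{q/2}=W^{q/2}\Bigl(\textstyle\sum_i\tfrac{\gamma^i}{W}\gamma^i\|U^i\|^2\Bigr)^{q/2}\le W^{\frac q2-1}\textstyle\sum_i(\gamma^i)^{1+q/2}\|U^i\|^q\le T^{\frac q2-1}\textstyle\sum_i(\gamma^i)^{1+q/2}\|U^i\|^q .
\]
Taking expectations and writing $C:=\sup_k\EE\|U^{k+1}\|^q<\infty$, this yields
\[
\EE\bigl[\Delta_l(T)^q\bigr]\ \le\ c_q C\,T^{\frac q2-1}\!\!\sum_{i=l+1}^{m(\tau_l+T)}\!\!(\gamma^i)^{1+q/2}\ =:\ K(T)\,\rho_l(T),
\]
where $\rho_l(T)$ is a block of consecutive tail terms of the convergent series $\sum_i(\gamma^i)^{1+q/2}$, hence $\rho_l(T)\to0$. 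In particular $\Delta_l(T)\to0$ in $L^q$, and when $q=2$ we are already done: then $\sum_i(\gamma^i)^{1+q/2}=\sum_i(\gamma^i)^2<\infty$, so $(M^k)$ is $L^2$-bounded, converges almost surely, and $\Delta_l(T)\le\sup_{k\ge l}\|M^k-M^l\|\to0$ a.s.

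The main obstacle is the almost-sure upgrade for general $q\ge2$: $\rho_l(T)$ need not be summable in $l$, so Markov's inequality together with a naive Borel--Cantelli over all $l$ fails. I would handle this by blocking. Partition $[0,\infty)$ into the intervals $[nT,(n+1)T)$, put $a_n=\min\{k:\tau_k\ge nT\}$, and observe that every window $[\tau_l,\tau_l+T]$ with $l$ in block $n$ is covered by a bounded number of consecutive blocks, so that $\Delta_l(T)\le2\,\Delta_{a_n}(2T)$ once $l$ is large enough for the step sizes to be $\le1$. Summing the moment bound over the block endpoints,
\[
\sum_n\EE\bigl[\Delta_{a_n}(2T)^q\bigr]\ \le\ K(2T)\sum_n\ \sum_{i=a_n+1}^{m(\tau_{a_n}+2T)}(\gamma^i)^{1+q/2},
\]
and since each index $i$ lies in only $O_T(1)$ of the ranges $(a_n,\,m(\tau_{a_n}+2T)]$, the right-hand side is at most $C_T\sum_i(\gamma^i)^{1+q/2}<\infty$. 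Borel--Cantelli then gives $\Delta_{a_n}(2T)\to0$ a.s.; combined with $\Delta_l(T)\le2\Delta_{a_n}(2T)$ this gives $\Delta_l(T)\to0$ a.s. for each fixed $T$, and ranging $T$ over the rationals and using that $\Delta_l(\cdot)$ is nondecreasing extends this to all $T>0$ on a single almost-sure event. The delicate point throughout is exactly this blocking step — arranging the localized tail sums to be summable while still covering every starting index $l$; the rest is the routine Burkholder--Davis--Gundy-plus-convexity computation.
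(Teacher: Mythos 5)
The paper does not actually prove this proposition: it is quoted verbatim from Bena\"{\i}m (1999) as a black-box ingredient, so there is no in-paper proof to compare against. Your argument is correct and is essentially the standard proof from the original source: reduce A1.1 (with the paper's $n$ read as $l$, which is indeed a typo) to the oscillation $\Delta_l(T)$ of the martingale $M^k=\sum_{i<k}\gamma^{i+1}U^{i+1}$, bound $\EE[\Delta_l(T)^q]$ via Burkholder--Davis--Gundy, and then exploit $\sum_{i=l+1}^{m(\tau_l+T)}\gamma^i\le T$ through Jensen with weights $\gamma^i/W$ to land exactly on a tail block of the convergent series $\sum_i(\gamma^i)^{1+q/2}$. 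You also correctly identify the one genuinely delicate point: for $q>2$ the hypothesis does not give $\sum_i(\gamma^i)^2<\infty$, so $M^k$ need not converge and the $L^q$ bound must be upgraded to almost-sure convergence by a blocking/Borel--Cantelli argument along the subsequence $a_n$, using that each index $i$ falls into only boundedly many of the windows $(a_n,m(\tau_{a_n}+2T)]$; the covering inequality $\Delta_l(T)\le 2\Delta_{a_n}(2T)$ and the final passage to all $T>0$ via rationals and monotonicity of $\Delta_l(\cdot)$ are both sound. I see no gaps.
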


\noindent
Let $X:\R_+\to M$ be an asymptotic pseudotrajectory of an induced flow
$\Phi$, with $M$ some metric space. The limit set $L(X)$ of $X$ is the
set of limits of convergent sequences $X(t_k)$, $t_k\to\infty$.

\begin{theorem}[Theorem 5.7 $i)$ in \cite{Benaim99}]
\label{theorem:5.7}
Let $X$ be a precompact asymptotic pseudotrajectory of $\Phi$. Then $L(X)$ is Internally  Chain Transitive.
\end{theorem}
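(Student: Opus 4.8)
The plan is to verify directly the three clauses in the definition of an ICT set for $A = L(X)$: that it is compact, that it is connected, and that any two of its points are joined by an $(\epsilon,T)$-chain lying inside $L(X)$. Throughout I write $\Phi = (\Phi_h)$ for the induced flow and $Y = \overline{X(\R_+)}$, which is compact by the precompactness hypothesis.

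First I would record the standard $\omega$-limit properties. Since $Y$ is compact and $L(X) = \bigcap_{t \ge 0} \overline{X([t,\infty))}$ is a nested intersection of nonempty compact sets, $L(X)$ is nonempty and compact; moreover $dist(X(t), L(X)) \to 0$ as $t \to \infty$, for otherwise precompactness would produce a limit point of $X$ lying outside $L(X)$. Connectedness follows from the continuity of $X$: if $L(X)$ were split by two disjoint open sets, the continuous curve $X(t)$ would have to traverse the gap between them infinitely often for large $t$, and a limit of such crossing points would lie in $L(X)$ strictly between the two pieces, a contradiction. For invariance I would use the asymptotic pseudotrajectory property directly: if $x = \lim_k X(t_k) \in L(X)$, then for each fixed $h \ge 0$ continuity of $\Phi_h$ gives $\Phi_h(X(t_k)) \to \Phi_h(x)$, while the APT estimate gives $dist(X(t_k + h), \Phi_h(X(t_k))) \to 0$, so $X(t_k + h) \to \Phi_h(x)$ and hence $\Phi_h(x) \in L(X)$; full invariance then follows from compactness by a standard argument. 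This invariance is exactly what will allow flow orbits started inside $L(X)$ to remain there.

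The heart of the argument is the chain-transitivity clause. Fix $\epsilon, T > 0$ and $x, y \in L(X)$. I would choose $t_0$ so large that for all $t \ge t_0$ both $dist(X(t), L(X)) < \eta$ and $\sup_{0 \le h \le T'} dist(X(t+h), \Phi_h(X(t))) < \eta$, where $T' > T$ is fixed (the window is taken slightly larger than $T$ so that the flow segments below have length strictly greater than $T$, as the definition of an $(\epsilon,T)$-chain requires) and $\eta > 0$ is a small parameter fixed last. Using $x, y \in L(X)$, I select $a \ge t_0$ with $dist(X(a), x)$ small and a later time $b$ with $dist(X(b), y)$ small, then partition $[a,b]$ into consecutive intervals $[\tau_{i-1}, \tau_i]$, $i = 1, \dots, m$, each of length in $(T, T']$. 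Along each interval I form the genuine flow segment $\bar y_i(h) = \Phi_h(X(\tau_{i-1}))$. The APT estimate forces the terminal point $\bar y_i(\tau_i - \tau_{i-1})$ to lie within $\eta$ of $X(\tau_i) = \bar y_{i+1}(0)$, so the segments $\{\bar y_i\}$ already constitute an $(\eta,T)$-chain from a point near $x$ to a point near $y$; the difficulty is that they lie only near $L(X)$, not inside it.

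The main obstacle, and the step I expect to be most delicate, is upgrading this near-chain to one contained in $L(X)$. Each starting point $X(\tau_{i-1})$ is within $\eta$ of $L(X)$, so I would pick $z_i \in L(X)$ with $dist(z_i, X(\tau_{i-1})) < \eta$ and replace $\bar y_i$ by the orbit $\tilde y_i(h) = \Phi_h(z_i)$, which remains in $L(X)$ by invariance. Because $\Phi$ is uniformly continuous on the compact set $Y$ over the bounded window $[0, T']$, its modulus of continuity controls $\sup_{0 \le h \le T'} dist(\bar y_i(h), \tilde y_i(h))$ in terms of $\eta$. Combining this with the earlier $\eta$-matching of consecutive endpoints and the closeness of $X(a), X(b)$ to $x, y$, a triangle-inequality bookkeeping shows that for $\eta$ small enough (depending on $\epsilon$, $T'$, and that modulus) the family $\{\tilde y_1, \dots, \tilde y_m\}$ is a genuine $(\epsilon, T)$-chain lying entirely in $L(X)$ from $x$ to $y$. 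Since $\epsilon, T, x, y$ were arbitrary, $L(X)$ meets every clause of the ICT definition, which completes the proof.
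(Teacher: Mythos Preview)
The paper does not actually prove this statement: it is quoted verbatim from Bena{\"\i}m (1999) and used as a black box in the proof of Theorem~4.10, so there is nothing in the paper to compare your argument against. Your write-up is essentially the standard proof of Bena{\"\i}m's Theorem~5.7(i): establish compactness and connectedness of $L(X)$ from precompactness and continuity of $X$, derive flow-invariance of $L(X)$ from the asymptotic-pseudotrajectory estimate, and then manufacture an $(\epsilon,T)$-chain by partitioning a long stretch of $X$ into windows of length in $(T,T']$, approximating each by a true flow segment, and finally nudging the starting points onto $L(X)$ using invariance together with uniform continuity of the flow on the compact set $Y$ over the bounded time window. That last projection step and the choice of $\eta$ via the modulus of continuity of $\Phi$ on $Y\times[0,T']$ are exactly the points where care is needed, and you handle them correctly.
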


\section{Proofs}
\label{proofs}

\begin{proof}[Proof of Lemma 3.1]
The probability that item $i$ is purchased in the first step is given by
\begin{equation*}
p_i^{1st}(\phi) = \frac{v_i f(\phi_i)}{\sum\limits_{j=1}^n v_j f(\phi_j)} q_i.
\end{equation*}
The probability that item $i$ is purchased in the second step and
no item was purchased in the first step is given by
\begin{equation*}
p_i^{2nd}(\phi) =\left( \frac{\sum\limits_{j=1}^n v_j f(\phi_j)(1-q_j)}{\sum\limits_{j=1}^n v_j f(\phi_j)} \right)\frac{v_i f(\phi_i)}{\sum\limits_{j=1}^n v_j f(\phi_j)} q_i.
\end{equation*}
More generally, the probability that item $i$ is purchased in step
$m$ while  no item was purchased in earlier steps is given by
\begin{equation}
\label{noarreg}
p_i^{mth}(\phi)=\left(\frac{\sum\limits_{j=1}^n v_j f(\phi_j) (1-q_j)}{\sum\limits_{j=1}^n v_j f(\phi_j)}\right)^{m-1} \frac{v_i f(\phi_i)}{\sum\limits_{j=1}^nv_j f(\phi_j)}q_i.
\end{equation}
Let $a=(\sum\limits_{j=1}^n v_j f(\phi_j) q_j)/(\sum\limits_{j=1}^n
v_j f(\phi_j))$. Observe that, if $q_{max}=\max_{i\in\{1,...,n\}}q_i$,
then $0< a\leq q_{max}\leq 1$.  Equation \eqref{noarreg} becomes
\begin{equation*}
p_i^{mth}(\phi) =\bigg(1-a\bigg)^{m-1}  \frac{v_i f(\phi_i)}{\sum\limits_{j=1}^nv_j f(\phi_j)} q_i.
\end{equation*}
Hence the probability that the next purchase is item $i$ is given by
\begin{equation*}
p_i(\phi) =  \sum\limits_{m=0}^\infty\bigg(1-a\bigg)^{m} \frac{v_i f(\phi_i)}{\sum\limits_{j=1}^n v_j f(\phi_j)} q_i.
\end{equation*}
Since $|1-a|<1$, we use the geometric series
\begin{equation*}
\sum\limits_{m=0}^\infty\bigg(1-a\bigg)^m=\frac{1}{a},
\end{equation*}
and then, the  probability that the next purchase is item $i$ is given by
\begin{equation*}
p_i(\phi) = \frac{v_iq_if(\phi_i)}{\sum\limits_{j=1}^n v_jq_jf(\phi_j)}.
\end{equation*}

\end{proof}


\begin{proof}[Proof Theorem 4.10]

  Thanks to $H1-H2$, Proposition \ref{prop:4.2} holds for
  $q=2$. As a result, we can apply Proposition \ref{prop:4.1} and $Z(t)$ from Equation \eqref{interp}
  is almost surely an asymptotic pseudo-trajectory for the flow
  induced by $F$. As $x^t\in \Delta^{n-1}$, then $Z(t)$ is
  precompact. Finally, using Theorem \ref{theorem:5.7}, the limit set
  $L\{x^t\}_{t \geq 0}$ is an ICT for Equation \eqref{RMC}.
\end{proof}

\section{Market Shares Versus Purchases}
\label{appendix:purchases}

The condition $d_i^0=a_i$ can be relaxed and the results still hold
but the notations become more complicated. Indeed, define the variables
$\mu_i^k=\frac{ a_i+d_i^k}{\sum_j a_j+d_j^k}$, with $d_i^0=0$,
consider $\hat{a}=\sum_{j=1}^na_i$ the cumulative appeal, and $\mathbf{a}, \mathbf{d^k}$ the vectors of appeals and purchases respectively. By definition
$\sum_{j=1}^nd_j^k=k$, then we can define the probability function
$p(\mu^k)$ by
\[
p_i(\mu^k)= \frac{v_i q_i f(\mu_i^k)}{\sum_{j=1}^n v_j  q_j f(\mu_j^k)},\quad  i\in\{1,...,n\}
\]
and recover a recurrence for $\mu$ as follows:
\begin{align*}
\mu^{k+1} & =  \frac{\mathbf{a+d^k}}{\hat{a}+k +1}  + \frac{e^k}{\hat{a}+k+ 1} \\
&= \frac{\mathbf{a+d^k}}{\hat{a}+k }\frac{\hat{a}+k }{\hat{a}+k +1}  + \frac{e^k}{\hat{a}+k+ 1} \\
          & =  \mu^k\frac{\hat{a}+k }{\hat{a}+k +1}  + \frac{e^k}{\hat{a}+k+ 1} \\
          & =  \mu^k\frac{\hat{a}+k+1 }{\hat{a}+k +1} -\frac{\mu^k}{\hat{a}+k+1} + \frac{e^k}{\hat{a}+k+ 1} \\
          & =  \mu^k + \frac{1}{\hat{a}+k+1} (p(\mu^k) - \mu^k + e^k - \EE[e^k | \Fk])\\
          &= \mu^k + \hat{\gamma}^{k+1}[\hat{F}(\mu^k)+\hat{U}^{k+1}].
\end{align*}
In consequence, all the results from this paper can be translated from the $\phi$ domain to the $\mu$ domain.

\newpage
\section{Dataset}
\label{dataset}
Table \ref{qualapp} shows the values of the qualities and appeals for the independent setting (obtained from \citep{Rank}).
Table \ref{visib} shows the values of the visibilities for each position $j\in \{1,\dots, n\}$.

\begin{table}[ht]
\centering
\begin{tabular}{lllclll}
\hline
Product&	 Quality&	Appeal&		\vline&	Product &	 Quality&		Appeal\\
\hline
1&	0.8&			0.18581654&	\vline&	 26&		0.278009&	 0.35136515\\
2&	0.72&		0.28594501&	\vline&	 27&		0.2673&		0.78687609\\
3&	0.68&		0.52073051&	\vline&	 28&		0.26083&		0.7369193\\
4&	0.65&		0.81398644&	\vline&	 29&		0.2512&		0.75227893\\
5&	0.60&		0.45868017&	\vline&	 30&		0.24396&		0.32580804\\
6&	0.57&		0.15955483&	\vline&	 31&		0.23941&		0.30674759\\
7&	0.55&		0.43715743&	\vline&	 32&		0.23622&		0.91103217\\
8&	0.52005&		0.38484972&	\vline&	 33&		0.22629&		0.76236248\\
9&	0.52&		0.63739211&	\vline&	 34&		0.2214&		0.11459921\\
10&	0.4887&		0.78174105&	\vline& 	 35&		0.22013&		0.7581713\\
11&	0.48224&		0.52983037&	\vline&	 36&		0.20418&		0.76994571\\
12&	0.4586&		0.6382574&	\vline&	 37&		0.20389&		0.67408264\\
13&	0.45837&		0.80597&		\vline&	 38&		0.19535&		0.41759683\\
14&	0.432&		0.2520265&	\vline&	 39&		0.1947&		0.68898008\\
15&	0.43067&		0.37266718&	\vline&	40&		0.18248&		0.82117398\\
16&	0.38623&		0.79358615&	\vline&	41&		0.17444&		0.33890645\\
17&	0.36792&		0.19972853&	\vline&	42&		0.16867&		0.63497574\\
18&	0.35492&		0.32368825&	\vline&	43&		0.16638&		 0.16224351\\
19&	0.35374&		0.94736709&	\vline& 	44&		0.15374&		0.47778872\\
20&	0.32799&		0.50704873&	\vline&	45&		0.14542&		0.23702317\\
21&	0.32589&		0.7105828&	\vline& 	46&		0.1387&		0.49406539\\
22&	0.30411&		0.92616787&	\vline&	 47&		0.12764&		0.45956048\\
23&	0.30352&		0.64768258&	\vline&	 48&		0.12217&		0.75210134\\
24&	0.29988&		0.51815068&	\vline&	 49&		0.11418&		0.66488509\\
25&	0.2905&		0.47170285&	\vline& 	50&		0.08636&		0.80257928\\
\hline
\end{tabular}
\caption{Values of quality and appeal for the products in the independent case. Recall that the values of the appeal in the anti-correlated setting are given by $a_i=1-q_i$.}
\label{qualapp}
\end{table}

\begin{table}[ht]
\centering
\begin{tabular}{llcll}
\hline
Position & Visibility &\vline&Position & Visibility\\
\hline
1&	0.83&			\vline&	 25&				0.16583292\\
2&	0.75&			\vline&	 26&				0.15370582\\
3&	0.69&			\vline&	 27&				0.13640378\\
4&	0.62&			\vline&	28&				0.13084858\\
5&	0.58&			\vline&	 29&				0.12666812\\
6&	0.48&			\vline&	 30&				0.12429217\\
7&	0.44&			\vline&	 31&				0.12362827\\
8&	0.4&				\vline&	 32&				0.11847651\\
9&	0.37&			\vline&	 33&				0.10675012\\
10&	0.35&			\vline&	 34&				0.1001895\\
11&	0.338&			\vline& 	 35&				0.10377821\\
12&	0.321&			\vline&	 36&				0.10192779\\
13&	0.317&			\vline&	 37&				0.10484361\\
14&	0.31063943&		\vline&	 38&				0.10609265\\
15&	0.2750814&		\vline&	 39&				0.11420125\\
16&	0.25493054&		\vline&	40&				0.1260095\\
17&	0.25148059&		\vline&	41&				0.13163135\\
18&	0.23254506&		\vline&	42&				0.14843575\\
19&	0.22517471&		\vline&	43&				0.15040223\\
20&	0.22429915&		\vline& 	44&				0.15529018\\
21&	0.21502087&		\vline&	45&				0.1699023,\\
22&	0.19038769&		\vline& 	46&				0.17265442\\
23&	0.18407585&		\vline&	 47&				0.17825863\\
24&	0.18185429&		\vline&	 48&				0.18851792\\
25&	0.17013229&		\vline&	50&				0.22057129\\ 	
\hline
\end{tabular}
\caption{Values  of the visibilities for each position $j\in \{1,\dots, n\}$.}
\label{visib}
\end{table}
\end{document}

0.11420125,0.1260095, 0.13163135,0.14843575,0.15040223,0.15529018,0.1699023, 0.17265442,\\
0.17825863,0.18851792,0.22057129]\\
\\
    v = [0.83,0.75,0.69,0.62,0.58,0.48,0.44,0.4, 0.37,0.35,0.338, 0.321, 0.317,0.31063943,0.2750814, \\
    0.25493054,0.25148059,0.23254506,0.22517471,0.22429915,0.21502087,0.19038769,0.18407585,\\
0.18185429,0.17013229,0.16583292,0.15370582,0.13640378,0.13084858,0.12666812,0.12429217,\\
0.12362827,0.11847651,0.10675012,0.1001895, 0.10377821,0.10192779,0.10484361,0.10609265,\\

    0.38484972,0.63739211,0.78174105,0.52983037,0.6382574,0.80597, 0.2520265, \\   0.37266718,0.79358615,0.19972853,0.32368825,0.94736709,0.50704873,0.7105828,\\
    0.92616787,0.64768258,0.51815068,0.47170285,
    0.35136515,0.78687609,0.7369193, \\
 0.75227893,0.32580804,0.30674759,0.91103217,0.76236248,0.11459921,0.7581713, \\

 \begin{tabbing}
0.76994571,0.67408264,0.41759683,0.68898008,0.82117398,0.33890645,0.63497574,\\
    0.16224351,0.47778872,0.23702317,0.49406539,0.45956048,0.75210134,0.66488509,0.80257928]
\end{tabbing}

